\newcommand{\nosubfigure}{\captionsetup{aboveskip=0.25em}}
\newcommand{\srpt}{\mathrm{SRPT}}
\newcommand{\psjf}{\mathrm{PSJF}}
\newcommand{\srpte}{\mathrm{SRPT\mathhyphen E}}
\newcommand{\psjfe}{\mathrm{PSJF\mathhyphen E}}
\newcommand{\srptb}{\mathrm{SRPT\mathhyphen B}}
\newcommand{\srptse}{\mathrm{SRPT\mathhyphen SE}}
\newcommand{\srptcue}{\mathrm{SRPT\mathhyphen CUE}}
\newcommand{\generic}{\pi}
\newcommand{\genericOther}{{\pi'}}
\newcommand{\size}{\mathsf{size}}
\newcommand{\remsize}{\mathsf{remsize}}
\newcommand{\sizee}{\mathsf{size\mathhyphen e}}
\newcommand{\remsizee}{\mathsf{remsize \mathhyphen e}}
\newcommand{\func}{\mathsf{func}}
\newcommand{\rank}[1]{\mathsf{rank}_{#1}}
\newcommand{\worst}[1]{\mathsf{rank}^{\mathsf{worst}}_{#1}}
\newcommand{\policies}{\{\srpte,\allowbreak \psjfe,\allowbreak \srptb,\allowbreak \srptse\}}
\newcommand{\policiesWithTrue}{\{\srpt,\allowbreak \psjf,\allowbreak \srpte,\allowbreak \psjfe,\allowbreak \srptb,\allowbreak \srptse\}}
\newcommand{\Twait}{T^{\mathsf{wait}}}
\newcommand{\Tres}{T^{\mathsf{res}}}
\newcommand{\by}[2][0mu]{&&\mkern#1\textenv{\footnotesize [#2]}}
\newcommand{\nestint}{\mkern-6mu\int}
\newlength{\fixmidfigure}
\newcommand{\premidfigure}{\setlength{\fixmidfigure}{\lastskip}\addvspace{-\lastskip}}
\newcommand{\postmidfigure}{\addvspace{\fixmidfigure}}
\let\@oldcite\cite
\renewcommand{\cite}[2][]{%
  {\IfSubStr{#2}{,}{}{\ifempty{#1}{\unskip~}{}}%
    \ifempty{#1}{\@oldcite{#2}}{\@oldcite[#1]{#2}}}}
\let\@oldendtikzpicture\endtikzpicture
\renewcommand{\endtikzpicture}{\@oldendtikzpicture\ignorespacesafterend}
\newcommand{\lookupauthors}[1]{%
    \IfEqCase{#1}{%
        {akbari-moghaddam_seh_2021}{Akbari-Moghaddam and Down}%
        {azar_flow_2021}{Azar et\penalty50\ al.}%
        {purohit_improving_2018}{Purohit et\penalty50\ al.}%
        {mitzenmacher_scheduling_2020}{Mitzenmacher}%
        {scully_soap_2018}{Scully et\penalty50\ al.}%
        {scully_soap_2018-1}{Scully and Harchol-Balter}%
        {scully_gittins_2020}{Scully et\penalty50\ al.}%
        {wierman_nearly_2005}{Wierman et\penalty50\ al.}%
        {scully_gittins_2021}{Scully and Harchol-Balter}%
        {dellamico_psbs_2016}{Dell'Amico et\penalty50\ al.}%
        {mitzenmacher_supermarket_2020}{Dell'Amico and Mitzenmacher}%
    }[\GenericError{}{unknown key in lookupauthors}{}{add it to the big IfEqCase}\textbf{??}]}
\newcommand{\citet}[2][]{\lookupauthors{#2}\ifempty{#1}{~\cite{#2}}{ \cite[#1]{#2}}}
\newcommand{\Citet}{\citet}
\title{Uniform Bounds for Scheduling with Job Size Estimates}
\author{Ziv Scully}{Carnegie Mellon University, Computer Science Department, Pittsburgh, PA, USA}{zscully@cs.cmu.edu}{0000-0002-8547-1068}{}
\author{Isaac Grosof}{Carnegie Mellon University, Computer Science Department, Pittsburgh, PA, USA}{igrosof@cs.cmu.edu}{0000-0001-6205-8652}{}
\author{Michael Mitzenmacher}{Harvard University, School of Engineering and Applied Sciences, Cambridge, MA, USA}{michaelm@eecs.harvard.edu}{0000-0001-5430-5457}{Supported in part by NSF grants CCF-2101140, CNS-2107078, and DMS-2023528, and by a gift to the Center for Research on Computation and Society at Harvard University.}
\authorrunning{Z.~Scully, I.~Grosof, and M.~Mitzenmacher}
\keywords{Scheduling, queueing systems, algorithms with predictions, shortest remaining processing time (SRPT), preemptive shortest job first (PSJF), M/G/1 queue}
\begin{document}

\maketitle

\begin{abstract}
We consider the problem of scheduling to minimize mean response time in M/G/1 queues where only \emph{estimated} job sizes (processing times) are known to the scheduler, where a job of true size~$s$ has estimated size in the interval $[\beta s, \alpha s]$ for some $\alpha \geq \beta > 0$. We evaluate each scheduling policy by its \emph{approximation ratio}, which we define to be the ratio between its mean response time and that of Shortest Remaining Processing Time (SRPT), the optimal policy when true sizes are known. Our question: is there a scheduling policy that (a) has approximation ratio near 1 when $\alpha$ and $\beta$ are near~1, (b) has approximation ratio bounded by some function of $\alpha$ and~$\beta$ even when they are far from~1, and (c) can be implemented without knowledge of $\alpha$ and~$\beta$?

We first show that naively running SRPT using estimated sizes in place of true sizes is \emph{not} such a policy: its approximation ratio can be arbitrarily large for any fixed $\beta < 1$. We then provide a simple variant of SRPT for estimated sizes that satisfies criteria (a), (b), and~(c). In particular, we prove its approximation ratio approaches 1 uniformly as $\alpha$ and $\beta$ approach 1. This is the first result showing this type of convergence for M/G/1 scheduling.

We also study the Preemptive Shortest Job First (PSJF) policy, a cousin of SRPT. We show that, unlike SRPT, naively running PSJF using estimated sizes in place of true sizes satisfies criteria (b) and~(c), as well as a weaker version of~(a).
\end{abstract}

\section{Introduction}

Minimizing mean response time of jobs in a preemptive single-server queue
is a fundamental scheduling problem.
If the scheduler knows each job's size,\footnote{A job's size is its processing time.}
then the optimal policy is \emph{Shortest Remaining Processing Time} (SRPT),
which always serves the job of least remaining size:
least size minus time served so far.
However, in practical queueing systems,
it is rare that the scheduler knows each job's exact size, which is required for SRPT.
Instead, it may be that the scheduler has only an \emph{estimated size} for each job.

In settings where the scheduler knows only job size estimates,
rather than true sizes,
how should one schedule to minimize mean response time?
We study this question in a stochastic online setting, namely an M/G/1 queue,\footnote{%
The M/G/1 is a queueing model with Poisson arrivals and i.i.d. job sizes.
We define the M/G/1 formally in \cref{sec:model}.}
in which jobs arrive randomly over time.
We use $T$ to denote the distribution of response time,
and we seek policies that achieve strong guarantees on $\E{T}$, the mean response time.
We focus our attention on simple policies that do not depend on
detailed knowledge of the distributions of true or estimated job sizes,
as such knowledge may not be available in practice.
While some simple heuristics have been proposed in prior work,
no performance guarantees have been proven for these policies.
We evaluate each scheduling policy by its \emph{approximation ratio}, which we define to be the ratio between its mean response time and that of Shortest Remaining Processing Time (SRPT), the optimal policy when true sizes are known.

In the context of online algorithms with predictions \cite{DBLP:journals/jacm/LykourisV21,DBLP:books/cu/20/MitzenmacherV20},
policy design has two key goals: ``consistency'',
which requires near-optimal performance under low error,
and ``robustness'', which requires bounded approximation ratio
under arbitrary error.
Unfortunately, as we explain in \cref{sec:gittins_bound},
this type of robustness is provably unachievable in this context.
Instead, we focus on a more appropriate guarantee,
which we call ``graceful degredation'',
which requires that the performance degrades smoothly and slowly
as error increases.

We focus on the setting of multiplicative errors:
We assume that a job of true size~$s$ has estimated size
in the interval $[\beta s, \alpha s]$ for some $\alpha \geq \beta > 0$.  We refer to this assumption by saying the jobs have $(\beta,\alpha)$-bounded estimates.  (Here $\beta$ stands for ``below'' and $\alpha$ for ``above.'')
In this context, a policy~$\generic$ is consistent if $\E{T_\generic} \to \E{T_\srpt}$ in the limit as $\alpha, \beta \to 1$.
Graceful degradation requires that for some constant~$C$,
we have $\E{T_\generic} \le C\frac{\alpha}{\beta} \E{T_\srpt}$
for all $\alpha \geq \beta > 0$.

In trying to achieve consistency and graceful degradation,
a first policy one might consider is the \emph{SRPT with Estimates} (SRPT-E) policy,
which always serves the job of least estimated size minus time served so far.
Unfortunately, as we prove in \cref{thm:srpt-e_result},
SRPT-E can have infinite approximation ratio if $\beta < 1$,
so it has neither consistency nor graceful degradation.

In this work, we present the first policy which provably has both consistency and graceful degradation:
the \emph{SRPT with Bounce} (SRPT-B) policy,
defined in \cref{sec:model}.
Specifically, we show that SRPT-B's approximation ratio is at most $3.5\alpha/\beta$
and approaches~$1$ uniformly as $\alpha, \beta \to 1$.
We also study the policy \emph{Preemptive Shortest Job First with Estimates} (PSJF-E),
for which we prove even better graceful degradation guarantees,
and consistency guarantees relative to the perfect-information PSJF policy:
we show $\E{T_\psjfe} \leq \frac{\alpha}{\beta} \E{T_\psjf}$,
which turns out to imply that PSFJ-E's approximation ratio is at most~$1.5\alpha/\beta$.

See \cref{sec:model} for the details of our queueing model
and the definitions of the scheduling policies we consider,
\cref{sec:results} for a full statement of our main results,
and \cref{sec:proof_overview} for a high-level overview of our proof techniques.

\subsection{Related Work}

Policies for ordering jobs according to their service time have been studied extensively in single queues.  The text \cite{harchol-balter_performance_2013} provides an excellent introduction to the analysis of standard policies, such as Shortest Job First (SJF), PSJF, and SRPT, in the single-queue setting.

Settings where estimates or predictions of service times, such as one might obtain from machine learning algorithms, have been much less studied.  The work closest to ours is that of Wierman and Nuyens \cite{wierman_scheduling_2008}.
They study policies that they dub $\epsilon$-SMART policies.
Such policies include variations of SRPT and PSJF
with inexact job sizes,
and they bound the performance
of such policies
based on how inexact the estimates can be \cite{wierman_scheduling_2008}.
However, their results only apply to two simpler types of error:
addditive error, where the estimate of a job of size $s$
is within $[s-\sigma, s+\sigma]$;
and speedup error, where the estimate is updated as the job runs,
and the estimate of a job of remaining size $r$ is within $[(1-\sigma)r, (1+\sigma)r]$.
In contrast, we primarily focus on the more realistic setting of
multiplicative non-updating error.
Their results on mean response time demonstrate only graceful degradation,
are restricted to the setting of speedup error,
and require additional assumptions on the job size distribution.
While it is not our primary focus, we also discuss
using the techniques in this paper
to achieve consistency and graceful degradation in the speedup-error setting in \cref{sec:srpt-se, sec:continuous-estimate}.
Such results do not require additional assumptions on the job size distribution.

\Citet{dellamico_psbs_2016} empirically study scheduling policies for queueing systems with estimated sizes; \Citet{mitzenmacher_supermarket_2020} also perform an empirical study of scheduling policies with estimated sizes, but in the context of multiple queues using the power of two choices.  Mitzenmacher provides formulas for the mean response time for M/G/1 queues under scheduling policies where
service times are predicted rather than known
exactly according to a stochastic model, including for our SRPT-E and PSJF-E policies \cite{mitzenmacher_scheduling_2020}.\footnote{In \cite{mitzenmacher_scheduling_2020} the schemes using predictions are referred to as SPRPT (shortest predicted remaining processing time), and PSPJF;  there does not seem to be a consistent nomenclature for policies with predictions/estimates, and we hope our labeling is more readily understood.}
In later work Mitzenmacher studies similar models where only a single bit of prediction-based advice is given, and also studies single-bit advice in the mean-field setting under the power of two choices \cite{doi:10.1137/1.9781611976830.1}.  Several of these works note that SRPT-E performance can degrade in situations where job sizes have high variance, and that PSJF-E can have better performance in these cases \cite{dellamico_psbs_2016,mitzenmacher_scheduling_2020,mitzenmacher_supermarket_2020}.

\Citet{akbari-moghaddam_seh_2021} introduce two new heuristic policies for scheduling given size estimates. They demonstrate empirically that the new heuristics overcome the shortcomings of SRPT-E. However, unlike PSJF-E, they do not give jobs a static priority, which often leads to better performance than PSJF-E. It is an interesting open question whether our techniques, which give bounds on PSJF-E's performance, could also bound the performance of the heuristics introduced by \citet{akbari-moghaddam_seh_2021}.

In the setting of scheduling with predictions for finite collections of jobs,
combinations of shortest predicted job first and round robin that yield good performance in terms of the competitive ratio were studied by \citet{purohit_improving_2018}.
For the online scheduling problem of weighted mean response time on a single machine
with a finite arrival sequence, \citet{azar_flow_2021} consider what we would refer to as $(1,\mu)$-bounded jobs where $\mu$ is known, and give algorithms that are competitive up to a logarithmic factor in the maximum ratios of the processing times, densities, and weights.
For unweighted mean response time,
they prove a variant of graceful degradation: if job size estimates have a multiplicative error of at most $\mu$, they prove an $O(\mu^2)$
competitive ratio bound. In contrast, our graceful degradation bounds are linear in $\mu = \frac{\alpha}{\beta}$, and do not depend on knowledge of $\alpha$ or $\beta$.\footnote{Following submission of this paper, the authors of \cite{azar_flow_2021} released a paper with improved results \cite{azar2021distortionoblivious}.  Specifically, they provide a new algorithm and prove an $O(\mu \log^2 \mu)$
competitive ratio bound for that algorithm, which further does not depend on knowledge of $\alpha$ or $\beta$.}

\section{Model and Preliminaries}
\label{sec:model}

We now define our model and provide basic notation and definitions.
Further definitions and background results appear in \cref{sec:background}.

We consider a stochastic scheduling setting called the M/G/1 queue with job size estimates.
The ``M'' in ``M/G/1'' refers to the assumption that jobs arrive according to a Poisson process (that is, with exponentially distributed interarrival times) with rate $\lambda$.
The ``G'' in ``M/G/1'' refers to the assumption that job sizes are sampled i.i.d. from a general distribution.
We additionally assume that each job $j$
has a size $s_j$ and an estimated size $z_j$, where the pair $(s_j, z_j)$
is sampled i.i.d. from some joint distribution $(S, Z)$.
We assume $(S, Z)$ is a continuous distribution
with joint density function $f_{S, Z}(s, z)$.
We discuss this continuity assumption further in \cref{sec:ties}.
We write $f_S(s)$ and $f_Z(z)$ for the marginal densities
of $S$ and~$Z$, respectively.
Regardless of scheduling policy,
the fraction of time the server is busy, also known as the load, is fixed.
We denote load by $\rho$ and note that $\rho = \lambda \E{S}$.
We assume that $\rho < 1$ to ensure that the server completes jobs faster than they arrive in the long run.
The ``1'' in ``M/G/1'' refers to there being a single server.

We focus on the setting of multiplicatively-bounded size estimates:
\begin{definition}
    Size estimates are $(\beta, \alpha)$-bounded
    for some constants $0 < \beta \le \alpha$
    if for all jobs~$j$,
    \begin{align*}
        z_j \in [\beta s_j, \alpha s_j].
    \end{align*}
\end{definition}

Mnemonically, $\beta$ is the bound below, and $\alpha$ is the bound above.

\begin{definition}
    \label{def:state}
    The \emph{state} of job~$j$ is the triple $x_j = (s_j, z_j, a_j)$
    consisting of
    \begin{itemize}
        \item its \emph{(true) size}~$s_j\esub$,
        which is the amount of time it must be served to complete;
        \item its \emph{estimated size}~$z_j\esub$,
        which is revealed when the job arrives
        and is guaranteed to be in the interval $[\beta s_j, \alpha s_j]$; and
        \item its \emph{age}~$a_j\esub$, the amount of service the job has received so far.
    \end{itemize}
    Job~$j$ completes once $a_j = s_j\esub$.
\end{definition}

\begin{figure}[t]
    \begin{subfigure}[t]{0.5\linewidth}
        \begin{tikzpicture}[figure]
    \yguide[\hphantom{estimated size~$z$}\llap{true size~$s$}]{0}{5.5}
    \xguide[$s$]{5.5}{0}
    \axes{10}{6}{$0$}{age}{$0$}{rank}
    \draw[srpt] (0, 5.5) -- (5.5, 0);
\end{tikzpicture}
        \caption{Shortest Remaining Processing Time (SRPT)}
        \label{fig:rank:srpt}
    \end{subfigure}\hfill\begin{subfigure}[t]{0.5\linewidth}
        \begin{tikzpicture}[figure]
    \yguide[\hphantom{estimated size~$z$}\llap{true size~$s$}]{0}{5.5}
    \axes{10}{6}{$0$}{age}{$0$}{rank}
    \draw[psjf] (0, 5.5) -- (10, 5.5);
\end{tikzpicture}
        \caption{Preemptive Shortest Job First (PSJF)}
        \label{fig:rank:psjf}
    \end{subfigure}\\[1em]\begin{subfigure}[t]{0.5\linewidth}
        \begin{tikzpicture}[figure]
    \yguide[estimated size~$z$]{0}{4}
    \xguide[$z$]{4}{0}
    \axes{10}{6}{$0$}{age}{$0$}{rank}
    \draw[srpt-e] (0, 4) -- (5.5, -1.5);
\end{tikzpicture}
        \caption{SRPT with Estimates (SRPT-E)}
        \label{fig:rank:srpt-e}
    \end{subfigure}\hfill\begin{subfigure}[t]{0.5\linewidth}
        \begin{tikzpicture}[figure]
    \yguide[estimated size~$z$]{0}{4}
    \axes{10}{6}{$0$}{age}{$0$}{rank}
    \draw[psjf-e] (0, 4) -- (10, 4);
\end{tikzpicture}
        \caption{PSJF with Estimates (PSJF-E)}
        \label{fig:rank:psjf-e}
    \end{subfigure}\\[1em]\begin{subfigure}[t]{0.5\linewidth}
        \begin{tikzpicture}[figure]
    \yguide[estimated size~$z$]{8}{4}
    \xguide[$z$]{4}{0}
    \axes{10}{6}{$0$}{age}{$0$}{rank}
    \draw[srpt-b] (0, 4) -- (4, 0) -- (8, 4) -- (10, 4);
\end{tikzpicture}
        \caption{SRPT with Bounce (SRPT-B)}
        \label{fig:rank:srpt-b}
    \end{subfigure}\hfill\begin{subfigure}[t]{0.5\linewidth}
        \begin{tikzpicture}[figure]
    \yguide[estimated size~$z$]{0}{4}
    \xguide[true size~$s$]{7}{0}
    \axes{10}{6}{$0$}{age}{$0$}{rank}
    \draw[srpt-se] (0, 4) -- (7, 0);
\end{tikzpicture}
        \caption{SRPT with Scaling Estimates (SRPT-SE)}
        \label{fig:rank:srpt-se}
    \end{subfigure}
    \caption{Rank Functions of Size-Estimate-Based Policies}
    \label{fig:rank}
\end{figure}
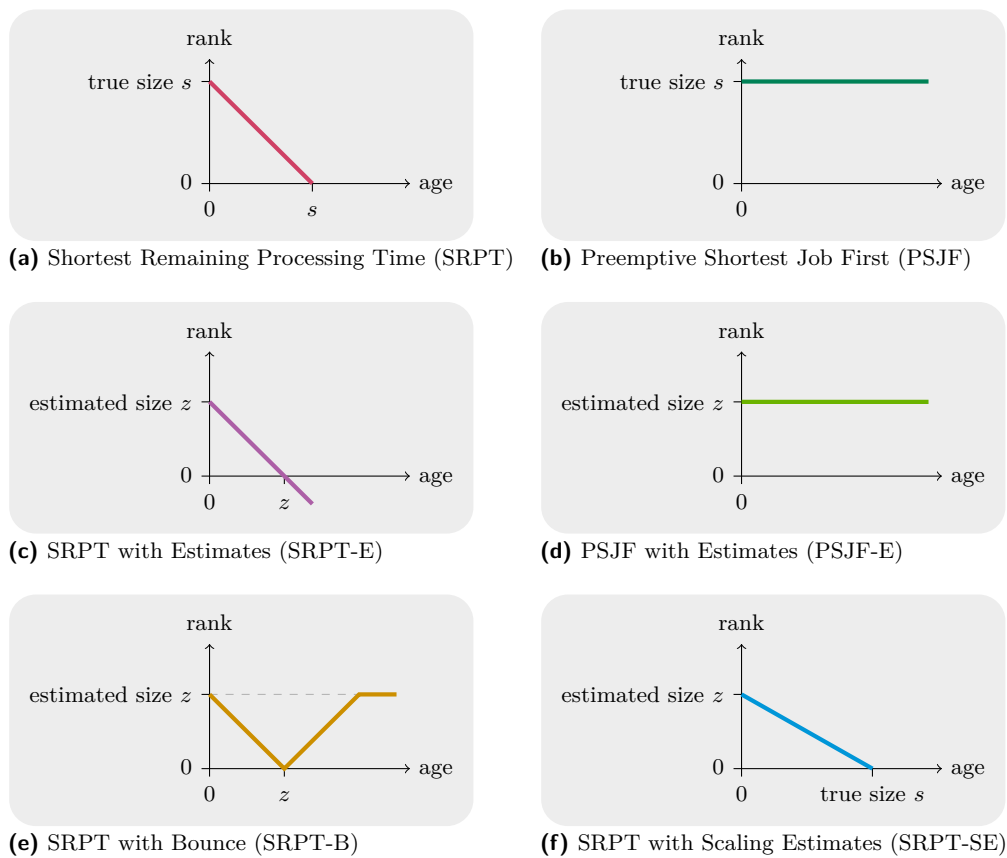

We now formally define the scheduling policies we consider.

\begin{definition}
    \label{def:policy}
    We consider six scheduling policies in this work.
    We define each policy~$\generic$ by a \emph{rank function},
    denoted $\rank{\generic}(x)$ or $\rank{\generic}(s, z, a)$
    assigning a \emph{rank}, or priority,
    to a job based on its state $x = (s, z, a)$.
    The scheduler always serves whichever job has the least rank.\footnote{%
    We tiebreak arbitrarily. Given our continuous job-size assumption and our specific policies, ties happen with probability zero. See \cref{sec:ties} for details.}
    The policies, which we illustrate in \cref{fig:rank}, are the following:
    {\setlength{\jot}{0em}\newcommand{\spacing}{\mkern30mu}
    \begin{alignat*}{3}
        & \text{\emph{Shortest Remaining Processing Time} (SRPT)} \spacing &
        & \rank{\srpt}(s, z, a) & &= s - a, \\
        & \text{\emph{Preemptive Shortest Job First} (PSJF)} \spacing &
        & \rank{\psjf}(s, z, a) & &= s, \\
        & \text{\emph{SRPT with Estimates} (SRPT-E)} \spacing &
        & \rank{\srpte}(s, z, a) & &= z - a, \\
        & \text{\emph{PSJF with Estimates} (PSJF-E)} \spacing &
        & \rank{\psjfe}(s, z, a) & &= z, \\
        & \text{\emph{SRPT with Bounce} (SRPT-B)} \spacing &
        & \rank{\srptb}(s, z, a) & &= \min\{|z - a|, z\}, \\
        & \text{\emph{SRPT with Scaling Estimates} (SRPT-SE)} \spacing &
        & \rank{\srptse}(s, z, a) & &= z/s \cdot (s - a).
    \end{alignat*}}
\end{definition}

For SRPT, shown in \cref{fig:rank:srpt} the rank of a job is the {\em remaining size} $s-a$, while for SRPT-E, shown in \cref{fig:rank:srpt-e}, the rank is
the {\em estimated remaining size} $z-a$, using the estimate in place of the true size.  (Note the rank for SRPT-E can be negative.)  Similarly, PSJF-E's rank function $z$ uses the estimate where PSJF uses the true size $s$ for its rank, as shown in \cref{fig:rank:psjf, fig:rank:psjf-e}.

For SRPT-B, shown in \cref{fig:rank:srpt-b}, the rank is given by $\min\{|z-a|,z\}$.
SRPT-B's rank is identical to SRPT-E's rank for ages $a \in [0, z]$,
but rises back to $z$ for larger $a$, yielding the bounce and thus the name SRPT with Bounce.

As a theoretical tool, we will also consider SRPT with Scaling Estimates, or SRPT-SE, shown in \cref{fig:rank:srpt-se}, for which the rank function is $z/s \cdot (s-a)$, a horizontally stretched version of SRPT-E.
Note that SRPT-SE is not implementable in our model,
as the scheduler does not have access to the true size~$s$.

\section{Description of Main Results}
\label{sec:results}

As discussed in the introduction,
our goal is to derive the first provably consistent and gracefully-degrading
policies in the size-estimate setting.
In the setting of $(\beta, \alpha)$-bounded size estimates,
consistency requires that in the $\beta, \alpha \to 1$ limit,
the policy achieves optimal mean response time,
matching that of SRPT, the optimal known-size policy.
``Graceful degradation'' requires that a policy's mean response
is bounded relative to that of SRPT and the $\alpha$ and $\beta$ values:
\begin{align*}
    \E{T_\generic} \le C \frac{\alpha}{\beta} \E{T_\srpt},
\end{align*}
for some constant~$C$.
Robustness, in the sense of achieving constant approximation ratio for arbitrary errors, is not possible in this setting, as we discuss in \cref{sec:gittins_bound}.

First, we show that consistency and graceful degradation are not straightforward to achieve.
Our first result shows that simply using SRPT-E (SRPT with Estimates),
a natural policy studied previously \cite{mitzenmacher_scheduling_2020},
yields mean response times that are not bounded within a constant factor of SRPT
in the worst case, even with $(\beta, \alpha)$-bounded size estimates.

\begin{restatable*:theorem}[Performance of SRPT-E]
    \label{thm:srpt-e_result}
    Consider the M/G/1 with $(\beta, \alpha)$-bounded size estimates.
    \begin{enumerate}[(a)]
    \item
        For any size distribution~$S$,
        there exists a joint distribution $(S, Z)$ of true and estimated sizes such that
        the mean response time of SRPT-E is bounded below by
        \begin{equation*}
            \E{T_\srpte} \geq \frac{\lambda(1 - \beta)^2}{2} \E{S^2},
        \end{equation*}
    \item
        The approximation ratio of SRPT-E may be arbitrarily large or infinite
        whenever $\beta < 1$.
    \end{enumerate}
\end{restatable*:theorem}

We consider a novel variation of SRPT, SRPT with Bounce (SRPT-B),
and prove it is consistent and gracefully-degrading, without knowledge of $\alpha$ and $\beta$.
This is the first proof of a policy satisfying these criteria.
Here, as $\alpha$ and $\beta$ approach 1, SRPT-B approaches the performance of SRPT, and it achieves a suitable finite approximation ratio for all $\alpha$ and $\beta$.
Formally, we prove the following:

\begin{restatable*:theorem}[Performance of SRPT-B]
    \label{thm:srpt-b_result}
    Consider the M/G/1 with $(\beta, \alpha)$-bounded size estimates.
    \begin{enumerate}[(a)]
    \item
        The mean response time of SRPT-B is bounded above by
        \begin{align*}
            \E{T_\srptb}
            &\leq \frac{\alpha}{\beta}\E{T_\srpt}
                + K(\alpha, \beta) \gp*{\frac{1}{\rho} \ln\frac{1}{1 - \rho} - 1} \E{S},
        \end{align*}
        where
        \begin{equation*}
            K(\alpha, \beta)
            = \gp*{\frac{3}{2}\alpha \1(\beta < 1) + 1}
                \min\curlgp*{1, \max\curlgp[\Big]{1 - \frac{1}{\alpha}, \frac{1}{\beta} - 1}}
            \leq 2.5 \frac{\alpha}{\beta}.
        \end{equation*}
    \item
        The approximation ratio of SRPT-B is at most~$3.5 \alpha/\beta$.
    \item
        As $\alpha$ and~$\beta$ converge to~$1$,
        the approximation ratio of SRPT-B converges to~$1$
        uniformly in the arrival rate
        and the joint distribution of true and estimated sizes.
    \end{enumerate}
\end{restatable*:theorem}

The expression we give for $K(\alpha, \beta)$ in \cref{thm:srpt-b_result}(a) is a compromise between simplicity and tightness.

Intuitively, the key properties of SRPT-B that allow us to prove \cref{thm:srpt-b_result} are:
\begin{itemize}
    \item The rank function is equal to $z-a$ for $a \le z$.
    This ensures that if $z = s$, our policy assigns jobs the same rank as SRPT.
    \item No job receives more than $2r$ service with rank $\le r$,
    for any rank $r < z$.
    Jobs with poor estimates consequently don't cause long delays to low-rank jobs.
    The SRPT-E policy does not have the same guarantee,
    resulting in the poor response times in \cref{thm:srpt-e_result}.
    \item A job's rank never grows larger than $z$, its initial rank.
    Without this cap on the rank the approximation ratio could be arbitrarily poor when $\beta < 1/2$,
    as we explain in \cref{rmk:bounce_limit, sec:SRPT-B-without-cap}.
\end{itemize}

Finally, we consider PSJF-E, which we bound relative to PSJF.
We prove PSJF-E is consistent relative to PSJF,
achieving a mean response time ratio of~$\alpha/\beta$.
While this is a weaker consistency result than that of SRPT-B,
PSJF often performs within a few percent of SRPT in practice,
making the result nearly as strong.
In the worst case, PSJF's mean response time
is within a factor of~$1.5$ of SRPT's \cite{wierman_nearly_2005},
so PSJF-E is within a factor of $1.5 \alpha/\beta$ of optimal.\footnote{%
    We are not aware of any tight examples in which $\E{T_\psjf} = 1.5 \E{T_\srpt}$.
    The largest mean response time ratio between PSJF and SRPT
    that we are aware of is approximately~$1.125$,
    which occurs when $S$ has a Pareto distribution with shape parameter~$1.5$
    and $\rho$ is nearly~$1$.
    A tighter bound on the approximation ratio of PSJF
    would strengthen our bound on the approximation ratio on PSJF-E.}
This is a stronger graceful-degradation bound than we obtained for SRPT-B.

\begin{restatable*:theorem}[Performance of PSJF-E]
    \label{thm:psjf-e_result}
    Consider the M/G/1 with $(\beta, \alpha)$-bounded size estimates.
    \begin{enumerate}[(a)]
    \item
        The mean response time of PSJF-E is bounded above by
        \begin{equation*}
            \E{T_\psjfe} \leq \frac{\alpha}{\beta} \E{T_\psjf}.
        \end{equation*}
    \item
        The approximation ratio of PSJF-E is at most $1.5 \alpha/\beta$.
    \end{enumerate}
\end{restatable*:theorem}

One can view our PSJF-E result
as bounding what \citet{mitzenmacher_scheduling_2020}
dubs the ``price of misprediction'' of PSJF-E.
An algorithm's price of misprediction is its performance ratio
relative not to the optimal algorithm,
in this case SRPT,
but relative to a version of the algorithm that has perfect information,
in this case PSJF.

\subsection{Discussion of Our Results}

In the wider context of online algorithms with predictions,
the three goals discussed in this paper can be stated more generally:

\begin{description}
    \item[Consistency:] In the limit as the prediction quality becomes perfect,
    the performance should approach that of the optimal algorithm with perfect information.  For instance, one might try to achieve $A$-consistency \cite{DBLP:journals/jacm/LykourisV21},
    which requires that the competitive ratio is bounded by $A$ as the error in the predictions goes to 0.
    \item[Robustness:] In the limit as the prediction quality becomes arbitrarily poor, the performance should be comparable to that of the optimal algorithm with perfect information. For instance, one might try to achieve $B$-robustness \cite{DBLP:journals/jacm/LykourisV21}, which requires that the competitive ratio is bounded by $B$ under arbitrarily poor predictions.
    \item[Graceful Degradation:]
    As the prediction quality worsens from perfect to worthless,
    the performance should degrade smoothly and slowly.
    For instance, one might try to achieve $C$-graceful degradation,
    which requires that the competitive ratio is bounded by $C$
    times some measure of the estimate quality.
\end{description}

Looked at in this framework,
we prove that SRPT-B is $1$-consistent and has $3.5$-graceful degradation,
where $\alpha/\beta$ is our measure of estimate quality for $(\beta,\alpha)$-bounded size estimates.
We also prove that PSJF-E is $1.5$ consistent and has $1.5$-graceful degradation,
where the factor of $1.5$ comes from the maximum gap between PSJF and SRPT.

While we do not have robustness results for these algorithms,
that is because in the context of scheduling in the M/G/1,
the robustness property is provably unachievable.
In particular, no online policy without prediction information can achieve
a constant approximation ratio against SRPT,
as discussed in \cref{sec:gittins_bound}.

We feel our emphasis on graceful degradation is a key contribution of our work that may apply to many other algorithms-with-predictions problems.
While consistency and robustness are well-known goals in the literature,
the graceful degradation goal has received less focus.
However, we argue that graceful degradation is \emph{extremely important}.
Real applications often have high-quality but imperfect predictions,
which is the regime where performance is bounded by a graceful degradation result.
The extreme cases of perfect or worthless estimates may come up less in practice.

To adapt the notion of robustness to the setting of M/G/1 scheduling,
one possible method would be to compare an algorithm against the optimal blind policy, which knows the job size distribution, but not the job sizes.
This policy is known; it is called the Gittins index policy \cite{gittins_bandit_1979}.
We discuss this policy in \cref{sec:gittins_bound}.
This method of comparing against the optimal blind policy might be applicable to other algorithms-with-predictions problems.

\section{Proof Overview}
\label{sec:proof_overview}

We now explain the main ideas we use to prove our main results.
We focus on our analysis of SRPT-B (the most complex result),
but we briefly comment on how the same ideas apply to analyzing
SRPT-E (see \cref{rmk:srpt-e_idea}) and PSJF-E (see \cref{rmk:psjf-e_idea}).

Our overall approach to comparing SRPT-B to SRPT
is to compare each to a third policy, namely SRPT-SE.
This approach proved useful because SRPT-SE's rank function
has similarities with both SRPT's and SRPT-B's.
\begin{itemize}
    \item Under both SRPT-SE and SRPT,
    a job's rank at every age is
    within a constant factor of its remaining size.
    \item Under both SRPT-SE and SRPT-B,
    a job's initial rank is its estimated size,
    and a job's rank never exceeds its initial rank.
\end{itemize}

In the remainder of this section,
we explain how the above properties help us compare SRPT-SE
to each of SRPT and SRPT-B.
Interestingly, the two comparisons make use of two very different methods
of analyzing mean response time.

\subsection{Comparing SRPT-SE to SRPT}
\label{sec:proof_overview:work_integral}

SRPT minimizes mean response by prioritizing jobs by remaining size
\cite{schrage_proof_1968}.
SRPT-SE almost prioritizes jobs by remaining size,
but it can make an error whenever two jobs' remaining sizes
are within a constant factor of each other.
The specific factor is $\alpha/\beta$:
if job~$1$ has remaining size~$r_1$
and job~$2$ has greater remaining size $r_2 > r_1$,
then SRPT will always serve job~$1$,
but SRPT-SE might serve job~$2$ if $\beta r_2 \leq \alpha r_1$.

Intuitively,
one might hope that because SRPT-SE only makes constant-factor errors
when prioritizing jobs,
its mean response time should suffer by only a constant factor.
We show that this indeed is the case,
and that the constant factor is the same.
Specifically, \cref{thm:srpt-se_result} states
\begin{equation}
    \label{eq:srpt-se_result_informal}
    \E{T_\srptse} \leq \frac{\alpha}{\beta} \E{T_\srpt}.
\end{equation}

In order to show~\cref{eq:srpt-se_result_informal},
we use a very recently developed formula for mean response time
\cite{scully_gittins_2020}.
At a high level, the formula expresses a policy's mean response time
in terms of an integral of various types of work.
We first describe the formula
(see \cref{sec:proof_overview:work_integral:formula})
and then describe how we apply it to proving~\cref{eq:srpt-se_result_informal}
(see \cref{sec:proof_overview:work_integral:application})

\subsubsection{Mean Response Time as a Work Integral}
\label{sec:proof_overview:work_integral:formula}

Define the \emph{$(\remsize \leq r)$-work} of a system to be
the total remaining size of the jobs in the system
that have remaining size $r$ or less,
as illustrated in \cref{fig:remsize_leq_r-work}.
\citet[Theorem~6.3]{scully_gittins_2020} show that
we can write the mean response time of any policy~$\generic$ in terms of
the mean amount of $(\remsize \leq r)$-work in the system:
\begin{equation}
    \label{eq:work_integral_informal}
    \E{T_\generic} =
    \frac{1}{\lambda} \int_0^\infty \frac{\E{\text{$(\remsize \leq r)$-work under~$\generic$}}}{r^2} \d{r}.
\end{equation}
See \cref{def:phi-work, def:shorthand} for
a formal definition of $(\remsize \leq r)$-work
and \cref{thm:response_time_work_integral} for
a formal statement of~\cref{eq:work_integral_informal}.

\begin{figure}
    \nosubfigure
    \begin{tikzpicture}[figure, x=16.5, y=16.5]
    \draw (2, 0) circle (2);
    \draw[thick, ->] (4, 0) -- ++(0.8, 0);
    \draw (-8.8, -2) -- ++(8.8, 0) -- ++(0, 4) -- ++(-8.8, 0);
    \draw (0, -2) -- ++(0, 4);
    \draw (-2, -2) -- ++(0, 4);
    \draw (-4, -2) -- ++(0, 4);
    \draw (-6, -2) -- ++(0, 4);
    \draw (-8, -2) -- ++(0, 4);
    \draw[thick, ->] (-10, 0) 
      -- ++(0.8, 0);
    \node at (-10, 2) {};
    \newcommand{\job}[3]{%
      \fill[cyan!51!white] (#1 - 0.4, -1.5) rectangle ++(0.8, #2/10 - #3/10);
      \fill[orange!28!white] (#1 - 0.4, -1.5 + #2/10 - #3/10) rectangle ++(0.8, #3/10);
      \ifstrequal{#3}{#2}{}{\draw[cyan!48!black] (#1 - 0.4, -1.5 + #2/10 - #3/10) -- ++(0.8, 0);}
      \draw[orange!48!black] (#1 - 0.4, -1.5) rectangle ++(0.8, #2/10);
    }
    \newcommand{\jobAnnotated}[5]{%
      \job{#2}{#3}{#4}
      \node[below,align=center] at (#2, -1.85)
        {\strut $r_{#1} = #4$};
      \ifstrequal{#5}{>}{
        \draw[ultra thick, red!68!blue!82!white]
          (#2, -3.5) ++(-0.25, -0.25) -- ++(0.5, 0.5) ++(0, -0.5) -- ++(-0.5, 0.5);
      }{
        \draw[ultra thick, green!48!yellow!65!black]
          (#2, -3.5) ++(-0.35, -0.05) -- ++(0.2, -0.2) -- ++(0.5, 0.5);
      }
    }
    \jobAnnotated{1}{2}{20}{5}{\leq} 
    \jobAnnotated{2}{-1}{17}{7}{\leq} 
    \jobAnnotated{3}{-3}{11}{9}{\leq} 
    \jobAnnotated{4}{-5}{27}{14}{>} 
    \jobAnnotated{5}{-7}{24}{24}{>} 
    \draw[decorate, decoration={brace, raise=0.25em}]
      (3 - 1/16, -4) -- (-4 + 1/16, -4)
      node[midway, below=0.4em, align=center]
      {$\text{$(\remsize \leq 10)$-work} = r_1 + r_2 + r_3 = 21$};
    \begin{scope}[shift={(7.6, 0)}]
      \fill[black!14, rounded corners=0.5em] (-2.2, -3.2) rectangle (5.6, 2) node {};
      \job{0}{30}{18}
      \draw[decorate, decoration={brace, raise=0.25em}]
        (-0.4, -1.5 + 1/16) -- ++(0, 3 - 1/8)
        node[midway, left=0.4em] {\strut size};
      \draw[decorate, decoration={brace, mirror, raise=0.25em}]
        (0.4, -1.5 + 1/16) -- ++(0, 1.2 - 1/8)
        node[midway, right=0.4em] {\strut age};
      \draw[decorate, decoration={brace, mirror, raise=0.25em}]
        (0.4, -1.5 + 1.2 + 1/16) -- ++(0, 3 - 1.2 - 1/8)
        node[midway, right=0.4em] {\strut remaining size~$r_j$};
      \node[below] at (1.85, -1.7) {\strut \textbf{\textsf{Legend}} (jobs indexed by~$j$)};
    \end{scope}
\end{tikzpicture}
    \caption{Example of $(\remsize \leq r)$-Work}
    \label{fig:remsize_leq_r-work}
\end{figure}

\subsubsection{Comparing SRPT-SE's and SRPT's Work Integrals}
\label{sec:proof_overview:work_integral:application}

With \cref{eq:work_integral_informal} in hand,
to compare the mean response times of SRPT-SE and SRPT,
it suffices to compare their amounts of $(\remsize \leq r)$-work.
In the proof of \cref{thm:srpt-se_result}, we show
\begin{equation}
    \label{eq:srpt-se_key_step}
    \E{\text{$(\remsize \leq r)$-work under SRPT-SE}}
    \leq \E[\bigg]{\text{$\gp[\bigg]{\remsize \leq \frac{\alpha}{\beta} r}$-work under SRPT}}.
\end{equation}
Combining \cref{eq:work_integral_informal, eq:srpt-se_key_step}
and using a change of variables
implies~\cref{eq:srpt-se_result_informal}.

The intuition behind \cref{eq:srpt-se_key_step} is as follows.
Because SRPT always serves the job of least remaining size,
it satisfies the following guarantee:
\begin{quote}
    Whenever the system has nonzero $(\remsize \leq r)$-work,
    SRPT serves a job of remaining size $r$ or less,
    thus decreasing the amount of $(\remsize \leq r)$-work.
\end{quote}
This guarantee implies that SRPT minimizes mean $(\remsize \leq r)$-work
among all scheduling policies.
In contrast, SRPT-SE satisfies a weaker guarantee:
\begin{quote}
    Whenever the system has nonzero $(\remsize \leq r)$-work,
    SRPT-SE serves a job of remaining size $\alpha/\beta \cdot r$ or less,
    thus decreasing the amount of $(\remsize \leq \alpha/\beta \cdot r)$-work.
\end{quote}
Roughly speaking, this means that
whenever the system's $(\remsize \leq r)$-work is nonzero,
SRPT-SE reduces $(\remsize \leq \alpha/\beta \cdot r)$-work
just as efficiently as SRPT does,
suggesting a relationship like \cref{eq:srpt-se_key_step} might hold.

The main technical challenge in proving \cref{eq:srpt-se_key_step}
is formalizing the above intuition.
The key ingredient turns out to be introducing
a new variant of $(\remsize \leq r)$-work.
The new variant, called \emph{$(\remsizee \leq r)$-work}
(see \cref{def:phi-work, def:shorthand}),
uses scaled estimated remaining size instead of true remaining size.
This new variant is important because
SRPT-SE always serves the job of least scaled estimated remaining size,
so it satisfies the following guarantee:
\begin{quote}
    Whenever the system has nonzero $(\remsizee \leq r)$-work,
    SRPT-SE serves a job of scaled estimated remaining size $r$ or less,
    thus decreasing the amount of $(\remsizee \leq r)$-work.
\end{quote}
This guarantee implies that,
analogously to SRPT minimizing mean $(\remsize \leq r)$-work,
SRPT-SE minimizes mean $(\remsizee \leq r)$-work
(see \cref{thm:soap_match_minimizes_work}).

\begin{remark}
    \label{rmk:psjf-e_idea}
    The proof of \cref{thm:psjf-e_result},
    which compares PSJF-E to PSJF,
    follows a similar strategy to the comparison of SRPT-SE to SRPT
    outlined above.
    However, the details are significantly more complicated.
    The main obstacle is that
    while \cref{eq:work_integral_informal} uses remaining size,
    PSJF-E and PSJF prioritize jobs by \emph{original}
    estimated and true size, respectively.
    We overcome this obstacle by introducing
    more variants of $(\remsize \leq r)$-work.
\end{remark}

\subsection{Comparing SRPT-B to SRPT-SE}
\label{sec:proof_overview:soap}

Our approach to comparing SRPT-B to SRPT-SE
looks very different from our approach to comparing SRPT-SE to SRPT.
In particular, we use a different method
of characterizing each policy's mean response time.
The method, often called the ``tagged-job method'',
has been used since the early days of M/G/1 scheduling theory
to analyze a variety of policies,
including SRPT \cite{schrage_queue_1966}.
Recently, \citet{scully_soap_2018} generalized the tagged-job method
to \emph{all} policies in which a job's rank varies as a function of its age,
including all of the policies we study (see \cref{def:policy}).
Below, we outline how the tagged-job method of \citet{scully_soap_2018}
applies to SRPT-B and SRPT-SE.

At a high level, the tagged-job method works
by following a single ``tagged'' job on its journey through the system.
The tagged job's response time is a random variable
with several sources of randomness:
\begin{itemize}
    \item the random true size~$S$ and estimated size~$Z$ of the tagged job,
    \item the random state of the system at the moment the tagged job arrives,
    and
    \item the random arrivals that occur after the tagged job.
\end{itemize}
Using the fact that arrival times are Poisson \cite{wolff_poisson_1982},
one can show that the expected response time of the tagged job,
where the expectation is taken over all of the above sources of randomness,
is indeed the system's mean response time
\cite{harchol-balter_performance_2013}.

To compute the tagged job's expected response time,
we first condition on its true and estimated sizes.
Specifically, let the random variable $T_\generic(s, z)$ denote
the response time of the tagged job under policy~$\generic$
given that it has true size $S = s$ and estimated size $Z = z$.
We will find $\E{T_\generic(s, z)}$,
from which mean response time follows by integrating over $s$ and~$z$:
\begin{equation}
    \label{eq:tagged_to_mean}
    \E{T_\generic}
    = \int_0^\infty \nestint_0^\infty \E{T_\generic(s, z)} f_{S, Z}(s, z) \d{s} \d{z}.
\end{equation}

To analyze the tagged job's response time~$T_\generic(s, z)$,
we split it into two parts:
\begin{description}
\item[Waiting time:]
    the amount of time between the tagged job's arrival and
    the moment the tagged job first receives service,
    denoted~$\Twait_\generic(s, z)$.
\item[Residence time:]
    the amount of time between the tagged job first receives service and
    the tagged job's completion,
    denoted~$\Tres_\generic(s, z)$.
\end{description}
We illustrate waiting time and residence time in \cref{fig:waiting_residence}.
We define mean waiting and residence times $\E{\Twait_\generic}$ and $\E{\Tres_\generic}$
analogously to~\cref{eq:tagged_to_mean}.

\begin{figure}
    \nosubfigure
    \renewcommand{\xscale}{10}
\begin{tikzpicture}[figure]
    \newcommand{\heightService}{2.5}
    \newcommand{\serveTagged}[2]{
      \draw[guide, solid, fill=green!70!cyan!21]
        (#1, 0) rectangle (#2, \heightService)
        node[black, midway, align=center, font=\scriptsize]
        {\vphantom{by}tagged job\\ in service\vphantom{by}};
    }
    \newcommand{\serveOther}[2]{
      \draw[guide, solid, fill=magenta!90!red!21]
        (#1, 0) rectangle (#2, \heightService)
        node[black, midway, align=center, font=\scriptsize]
        {\vphantom{by}other jobs\\ in service\vphantom{by}};
    }
    
    \newcommand{\xstart}{3}
    \newcommand{\xextra}{1.25}
    
    \serveOther{\xstart}{13}
    \serveTagged{13}{17.5}
    \serveOther{17.5}{21.5}
    \serveTagged{21.5}{26.75}
    \serveOther{26.75}{31.75}
    \serveTagged{31.75}{36}
    
    \draw[axis, <->] (\xstart - \xextra, 0) node {} -- (36 + \xextra, 0) node[right] {\vphantom{by}time};
    \draw[<-] (\xstart, 0) -- ++(0, 3.5) node[black, above, align=center] {\vphantom{by}tagged job\\arrives};
    \draw[->] (36, 0) -- ++(0, 3.5) node[black, above, align=center] {\vphantom{by}tagged job\\departs};
    
    \draw[guide] (\xstart, -3) -- ++(0, 3) (36, -3) -- ++(0, 3);
    \draw[decorate, decoration={brace, raise=0.25em, amplitude=0.5em, mirror}]
      (\xstart + 0.1, -3) -- (36 - 0.1, -3)
      node[midway, below=0.7em]
      {\vphantom{by}tagged job's response time};
    \draw[decorate, decoration={brace, raise=0.25em, amplitude=0.5em, mirror}]
      (\xstart + 0.1, 0) -- (13 - 0.1, 0)
      node[midway, below=0.7em]
      {\vphantom{by}tagged job's waiting time};
    \draw[decorate, decoration={brace, raise=0.25em, amplitude=0.5em, mirror}]
      (13 + 0.1, 0) -- (36 - 0.1, 0)
      node[midway, below=0.7em]
      {\vphantom{by}tagged job's residence time};
\end{tikzpicture}
    \caption{Response Time${} = {}$Waiting Time${} + {}$Residence Time}
    \label{fig:waiting_residence}
\end{figure}

To compare SRPT-B to SRPT-SE,
we separately compare the policies' waiting times
(see \cref{sec:proof_overview:soap:waiting, thm:srpt-b_waiting})
and residence times
(see \cref{sec:proof_overview:soap:residence, thm:srpt-b_residence}).
At a high level,
because SRPT-B and SRPT-SE have similar enough rank functions,
we are able to show that SRPT-B's waiting and residence times
are not too much larger than SRPT-SE's.

\subsubsection{Comparing Waiting Times}
\label{sec:proof_overview:soap:waiting}

Consider a tagged job of estimated size~$z$.
Under both SRPT-B and SRPT-SE, the tagged job's initial rank is~$z$.
The tagged job's waiting time therefore lasts until
any other jobs that remain in the system have rank greater than~$z$,
at which point the tagged job,
having better rank than all other jobs in the system,
is served for the first time.
Therefore, the tagged job's waiting time depends on
how long each \emph{other} job spends with rank $z$ or less.
In particular, the tagged job's waiting time does not depend on its own size,
so we denote its waiting time by simply $\Twait_\generic(z)$.
Specifically, letting
\begin{equation*}
    u_\generic(z) = \E[\bigg]{\gp[\bigg]{\begin{tabular}{@{}l@{}}\text{amount of service time during which} \\ \text{a job has rank $z$ or less under policy~$\generic$}\end{tabular}}^2},
\end{equation*}
it turns out that comparing $\E{\Twait_\srptb(z)}$ to $\E{\Twait_\srptse(z)}$
boils down to comparing $u_\srptb(z)$ to $u_\srptse(z)$
(see \cref{thm:soap}(a)).

One can use simple geometry to show that
under SRPT-B, the amount of service time a job spends with rank $z$ or less
is at most than twice the amount it would be under SRPT-SE,
implying $u_\srptb(z) \leq 4 u_\srptse(z)$.
This is strong enough to show graceful degradation of SRPT-B,
but it does not imply consistency.
But the rank functions of SRPT-B and SRPT-SE do become closer and closer together
as $\alpha$ and $\beta$ approach~1,
so one would expect consistency of SRPT-B to hold, too.

The main technical challenge in comparing waiting times
is obtaining a bound tight enough to show consistency.
In particular,
it does not suffice to simply bound $u_\srptb(z) - u_\srptse(z)$
with a quantity that vanishes as $\alpha$ and $\beta$ approach~1.
While this is a necessary first step,
it shows only that as $\alpha$ and $\beta$ approach~1,
the difference $\E{\Twait_\srptb(z)} - \E{\Twait_\srptse(z)}$
vanishes for all~$z$.
We seek bounds on mean response time,
so we need to integrate over~$z$ to show that
$\E{\Twait_\srptb} - \E{\Twait_\srptse}$ also vanishes.
This second step is purely computational but requires some care:
there are several choices one must make when bounding the integral,
and many choices lead to either intractable expressions
or bounds that are too weak to show consistency.

\begin{remark}
    \label{rmk:srpt-e_idea}
    The reason for SRPT-E's poor performance is that under SRPT-E,
    a job can spend up to
    a $1 - \beta$ fraction of its service time below rank~$0$.
    This means one can have $u_\srpte(z) \geq (1 - \beta)\E{S^2}$,
    from which \cref{thm:srpt-e_result} easily follows.
    SRPT-B avoids this problem thanks to the bounce in its rank function.
\end{remark}

\subsubsection{Comparing Residence Times}
\label{sec:proof_overview:soap:residence}

Consider a tagged job of true size~$s$ and estimated size~$z$.
When the tagged job starts its residence time,
its rank~$z$ is less than the rank of every other job in the system.
Moreover, under both SRPT-B and SRPT-SE,
a job's rank never exceeds this initial rank of~$z$
(see \cref{fig:rank:srpt-b, fig:rank:srpt-se}).
Therefore, the only reason that the tagged job might be preempted
is if new jobs arrive.

Suppose a new job of estimated size~$z'$ arrives
while the tagged job has age~$a$.
What determines whether the new arrival delays the tagged job?
\begin{itemize}
    \item If the new job's initial rank~$z'$
    is less than the tagged job's rank at age~$a$,
    then the new job has priority over the tagged job.
    \item If $z'$ is at least the tagged job's rank at age~$a$,
    then the tagged job has priority over the new job, initially. But if later the tagged job will have rank greater than~$z'$
    at some \emph{future} age $a' > a$,
    then when the tagged job reaches age~$a'$,
    the new job will have priority over the tagged job.
\end{itemize}
The conclusion of this discussion is what \citet{scully_soap_2018}
call the ``Pessimism Principle'',
the upshot of which is the following:
\begin{quote}
    When determining whether a new arrival will delay the tagged job,
    what matters is not the tagged job's current rank
    but rather its \emph{worst future rank}.
\end{quote}
We illustrate the difference between a job's rank and worst future rank
under SRPT-B in \cref{fig:worst_future_rank}.
Note that a job's worst future rank depends
not just on its estimated size and age
but also on its true size.

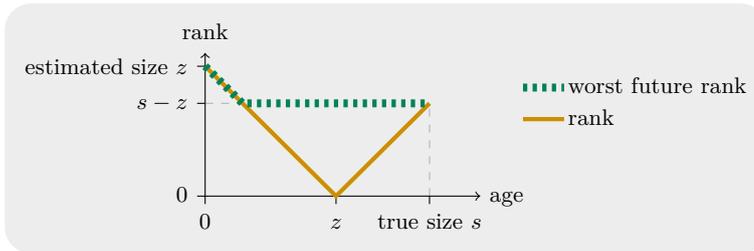
\begin{figure}
    \nosubfigure
    \begin{tikzpicture}[figure]
    \newcommand{\z}{7}
    \newcommand{\s}{12}
    \newcommand{\diffexpr}{(\s-\z)*\s/\z}
    \newcommand{\diff}{{\diffexpr}}
    
    \yguide[estimated size $z$]{0}{\z}
    \xguide[$z$]{\z}{0}
    \xguide[true size $s$]{\s}{\s-\z}
    \yguide[$s - z$]{{\z-(\s-\z)}}{\s-\z}
    
    \axes{14}{7}{$0$}{age}{$0$}{rank}
    \draw[srpt-b] (0, \z) -- ++(\z, -\z) -- ++({\s-\z}, {\s-\z});
    \draw[psjf, dotted, line width=3pt] (0, \z) -- ++({\s-2*(\s-\z)}, {-(\s-2*(\s-\z))}) -- ++({2*(\s-\z)}, 0);
    
    \begin{scope}[shift={(17, 5.85)}]
        \newcommand{\spacing}{1.7}
        \draw[psjf, dotted, line width=3pt] (0, 0) -- ++(2.2, 0);
        \node[right] at (1.9, 0) {\vphantom{by}worst future rank};
        \draw[srpt-b] (0, -\spacing) -- ++(2.2, 0);
        \node[right] at (1.9, -\spacing) {\vphantom{by}rank};
    \end{scope}
\end{tikzpicture}
    \caption{Example of Worst Future Rank under SRPT-B}
    \label{fig:worst_future_rank}
\end{figure}

The Pessimism Principle means that
bounding the tagged job's residence time of SRPT-B
boils down to bounding the tagged job's worst future rank under SRPT-B.
One simple bound on the tagged job's worst future rank is
its initial rank~$z$,
because under SRPT-B, a job's rank never exceeds its initial rank
(see \cref{fig:rank:srpt-b}).
As it happens, under PSJF-E, the tagged job's worst future rank
would always be~$z$.
This means that SRPT-B's mean residence time is at most that of PSJF-E,
which turns out to be simple to bound
(see \cref{thm:psjf-e_residence, thm:srpt_log}).
This is strong enough to show graceful degradation of SRPT-B,
but it does not imply consistency.

The main technical challenge in comparing residence times
is obtaining a bound tight enough to show consistency.
To show consistency of SRPT-B,
we would like to bound SRPT-B's residence time
in terms of that of SRPT-SE, not PSJF-E.
However, the tagged job's worst future rank at a given age
can be greater under SRPT-B than under SRPT-SE.
Our solution is, roughly speaking,
to bound the worst future rank under SRPT-B
to a ``shifted'' version of worst future rank under SRPT-SE
(see \cref{fig:residence}).
The result is a bound of the form
\begin{equation*}
    \E{\Tres_\srptb(s, z)}
    \leq \E{\Tres_\srptse(s, z)} + c \E{\Tres_\psjfe(s, z)},
\end{equation*}
where $c$ approaches~0 as $\alpha$ and $\beta$ approach~1.
This immediately implies an analogous bound on mean residence times.

\begin{remark}
    \label{rmk:bounce_limit}
    The Pessimism Principle,
    namely the fact that a job's residence time is governed by
    its worst future rank instead of its current rank,
    is the reason we cap the bounce in SRPT-B's rank function
    (see \cref{fig:rank:srpt-b})
    to no more than the job's initial rank.
    \Cref{sec:SRPT-B-without-cap} explains in more detail
    how performance degrades without this rank cap.
\end{remark}

\section{Background on M/G/1 Scheduling Theory}
\label{sec:background}

In this section, we review definitions and results from M/G/1 scheduling theory
that we use in our proofs.
Specifically, we review two recently developed methods for
computing a policy's mean response time.
\begin{itemize}
\item
    \Cref{sec:background:work_integral} reviews the ``work integral'' method,
    which we use to compare SRPT-SE to SRPT.
\item
    \Cref{sec:background:tagged_job} reviews the ``tagged job'' method,
    which we use to compare SRPT-B to SRPT-SE.
\end{itemize}
Having given in \cref{sec:proof_overview} an intuitive overview of each method,
the main purpose of this section is to present them more formally.

\subsection{Mean Response Time via the Work Integral Method}
\label{sec:background:work_integral}

We saw in \cref{sec:proof_overview:work_integral} that
one can compute a policy's mean response time
by looking at the amount of different types of work in the system.
Specifically, the key definition is $(\remsize \leq r)$-work,
the amount of work contributed by jobs
which have remaining size $r$ or less.

Below, we give a formal definition of a general kind of work,
which includes $(\remsize \leq r)$-work as a special case.
Recall from \cref{def:policy} that a job's state is
a tuple $x = (s, z, a)$
consisting of its true size~$s$, estimated size~$z$, and age~$a$.

\begin{definition}
    \label{def:phi-work}
    Let $\phi : \R_+^3 \to \{\mathsf{false}, \mathsf{true}\}$
    be a predicate on job states.
    \begin{enumerate}[(a)]
    \item
        The \emph{$\phi$-work of a job in state~$x$}, denoted $w(x, \phi)$,
        is the amount of service a job in state~$x$ requires to either
        complete or reach a state that does not satisfy~$\phi$.
        That is, a job's $\phi$-work, roughly speaking,
        is its remaining processing time while satisfying~$\phi$.
        Formally,
        \begin{equation*}
            w((s, z, a), \phi) = \sup\curlgp{w \in [0, s - a) \given \phi(s, z, a + w)}.
        \end{equation*}
    \item
        The \emph{(system) $\phi$-work} is the total
        $\phi$-work of all jobs in the system.
        We denote by $W_\generic(\phi)$ the steady-state distribution of
        the system $\phi$-work under policy~$\generic$.
    \end{enumerate}
\end{definition}

When discussing $\phi$-work,
it is helpful to have a shorthand notation for describing predicates.
The following definition describes such a shorthand.

\begin{definition}
    \label{def:shorthand}
    \leavevmode
    \begin{enumerate}[(a)]
    \item
        Let $\func : \R_+^3 \to \R$ be a real function on job states
        and $r \in \R$ be a constant.
        The predicate $(\func \leq r)$ is true for those states~$x$
        such that $\func(x) \leq r$.
        We define other inequality predicates similarly,
        e.g. $\func_1 \leq r < \func_2\esub$,
        and we omit the parentheses when they would be redundant,
        e.g. $W_\generic(\func \leq r)$.
        To disambiguate between functions and constants,
        we use $\mathsf{sans\mathhyphen serif}$ font for functions.
    \item
        We frequently use the following functions on job states
        in the above shorthand:
        {\setlength{\jot}{0em}
        \begin{alignat*}{3}
            & \text{\emph{(true) size}} \qquad &
            & \size(s, z, a) & &= s, \\
            & \text{\emph{(true) remaining size}} \qquad &
            & \remsize(s, z, a) & &= s - a, \\
            & \text{\emph{estimated size}} \qquad &
            & \sizee(s, z, a) & &= z, \\
            & \text{\emph{scaled estimated remaining size}} \qquad &
            & \remsizee(s, z, a) & &= z/s \cdot (s - a).
        \end{alignat*}}
    \end{enumerate}
\end{definition}

All of the functions in \cref{def:shorthand}(b)
happen to also be rank functions of one of the policies we study
(see \cref{def:policy}).
For example, $\remsize = \rank{\srpt}$.
We introduce the names in \cref{def:shorthand}(b)
to emphasize that, for instance,
we can consider $(\remsize \leq r)$-work under policies other than SRPT.

This last example is especially important,
because a policy's mean response time is connected to
its steady-state $(\remsize \leq r)$-work.

\begin{proposition}[\textnormal{%
    special case of \cite[Theorem~6.3]{scully_gittins_2020}}]
    \label{thm:response_time_work_integral}
    In the M/G/1, the mean response time of any policy~$\generic$ is
    \begin{equation*}
        \E{T_\generic}
        = \frac{1}{\lambda} \int_0^\infty \frac{\E{W_\generic(\remsize \leq r)}}{r^2} \d{r}.
    \end{equation*}
\end{proposition}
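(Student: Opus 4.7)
The plan is to derive the formula by combining Little's Law with a single pointwise integral identity that rewrites the number of jobs in the system in terms of $(\remsize \leq r)$-work.

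First, I would apply Little's Law to the entire system to reduce the problem to a ``snapshot'' identity. If $N_\generic$ denotes the steady-state number of jobs in the system under policy~$\generic$, then $\E{N_\generic} = \lambda \E{T_\generic}$, so it suffices to prove
\begin{equation*}
    \E{N_\generic} = \int_0^\infty \frac{\E{W_\generic(\remsize \leq r)}}{r^2} \d{r}.
\end{equation*}

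The core step is a pathwise identity at every instant the system is in steady state. Consider a snapshot with $N$ jobs having (strictly positive) remaining sizes $R_1, \ldots, R_N$. The key elementary fact is that for every $R > 0$,
\begin{equation*}
    \int_0^\infty \frac{R \, \1(R \leq r)}{r^2} \d{r} = \int_R^\infty \frac{R}{r^2} \d{r} = 1.
\end{equation*}
Summing this identity over all jobs currently in the system and interchanging sum and integral (permitted by nonnegativity) yields
\begin{equation*}
    N = \int_0^\infty \frac{1}{r^2} \sum_{j=1}^N R_j \, \1(R_j \leq r) \, \d{r} = \int_0^\infty \frac{W(\remsize \leq r)}{r^2} \d{r},
\end{equation*}
where the last equality is the definition of $(\remsize \leq r)$-work as the total remaining size of jobs with remaining size at most~$r$. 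Taking the steady-state expectation of both sides and applying Tonelli's theorem to move the expectation inside the integral (both sides are nonnegative), then combining with Little's Law and dividing by $\lambda$, gives the claimed formula.

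The main obstacle is really just spotting the integral identity $\int_R^\infty R/r^2 \, \d{r} = 1$; everything else is routine bookkeeping. The minor care required is verifying, using $\rho < 1$, that $\E{N_\generic}$ is finite so the interchanges are harmless, and confirming that the formal definition of $\phi$-work in \cref{def:phi-work} specializes to the quantity $\sum_j R_j \, \1(R_j \leq r)$ used in the pathwise step, matching the intuitive description from \cref{sec:proof_overview:work_integral:formula}.
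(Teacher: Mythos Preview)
The paper does not give its own proof of this proposition; it is imported verbatim as a special case of \cite[Theorem~6.3]{scully_gittins_2020} and used as a black box. So there is no ``paper's proof'' to compare against.

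That said, your argument is correct and is in fact the standard elementary derivation of this identity. The pointwise computation $\int_R^\infty R/r^2\,\d{r}=1$ for each job of remaining size~$R$, summed over jobs present and combined with Little's Law via Tonelli, is exactly right. Two small remarks. First, Tonelli applies to nonnegative integrands without any finiteness hypothesis, so you do not need $\E{N_\generic}<\infty$ to justify the interchange; the identity holds in $[0,\infty]$, which is appropriate since the proposition is stated for \emph{any} policy~$\generic$ (and some policies can have infinite mean response time even when $\rho<1$). Second, your caveat about checking that \cref{def:phi-work} specializes to $\sum_j R_j\,\1(R_j\le r)$ is well placed: the informal description and \cref{fig:remsize_leq_r-work} make the intended meaning unambiguous (jobs with remaining size exceeding~$r$ contribute zero), and that is what your argument uses.
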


The above result is a recently derived, powerful identity
for the mean response time.
We use it in our analyses of SRPT-SE (see \cref{sec:srpt-se})
and PSJF-E (see \cref{sec:psjf-e}).

\subsection{Mean Response Time via the Tagged Job Method}
\label{sec:background:tagged_job}

We describe the main ideas behind the tagged job method
in \cref{sec:proof_overview:soap}.
As a reminder, the approach is to focus on a single ``tagged'' job;
split its response time into two parts,
\emph{waiting time} and \emph{residence time} (see \cref{fig:waiting_residence});
and analyze each part separately.
The purpose of this section is to define the concepts and notation
that we need in order to write down formulas
for the tagged job's expected waiting and residence times.

Consider a tagged job of of true size~$s$ and estimated size~$z$
arriving to a steady-state system under some scheduling policy~$\generic$.
An important quantity when computing the tagged job's waiting and residence times
is the rate at which jobs with rank less than the tagged job
arrive to the system.
We can interpret the system load $\rho = \lambda \E{S}$
as the overall rate at which work arrives.
Analogously, define\footnote{%
    Recall that a random job's true size~$S$ and estimated size~$Z$
    are \emph{not} independent,
    which is important in the definition of $\rho_Z(z)$.}
\begin{align}
    \label{eq:rho}
    \rho_S(s) &= \lambda \E{S \1(S \leq s)}, &
    \rho_Z(z) &= \lambda \E{S \1(Z \leq z)}.
\end{align}
These are the average rates at which work arrives
when one only counts work from jobs
whose true size or estimated size, respectively,
is at most some threshold.
Specifically, $\rho_S(s)$ is important for analyzing
policies that use a job's true size (SRPT and PSJF),
while $\rho_Z(z)$ is important for analyzing
policies that use a job's estimated size (SRPT-E, PSJF-E, SRPT-B, and SRPT-SE).

Having defined \cref{eq:rho}, there are two more quantities we need to define
before stating formulas for the tagged job's waiting and residence times.
We give formal and informal expressions for each.
\begin{itemize}
\item
    In the waiting time formula, we use the quantity\footnote{%
        In the formal expression, $\vert{\cdot}$ denotes interval length.
        The definition we give is simplified by the fact that
        for the scheduling policies we consider,
        a job's rank is below a threshold~$r$
        for at most one contiguous interval of ages.
        A more complicated definition is needed for policies
        with general rank functions \cite{scully_soap_2018}.}
    \begin{align*}
        u_\generic(r)
        &= \E[\big]{\vert[\big]{\curlgp{a \in [0, S) \given \rank{\generic}(S, Z, a) \leq r}}^2} \\
        &= \E[\bigg]{\gp[\bigg]{\begin{tabular}{@{}l@{}}\text{amount of service time during which} \\ \text{a job has rank $r$ or less under policy~$\generic$}\end{tabular}}^2}.
    \end{align*}
\item
    In the residence time formula, we use the \emph{worst future rank} of a job,
    which we define as
    \begin{align*}
        \worst{\generic}(s, z, a)
        &= \sup_{b \in [a, s)} \rank{\generic}(s, z, b) \\
        &= \gp[\bigg]{\begin{tabular}{@{}l@{}}\text{maximum rank a job currently in state $(s, z, a)$ has} \\ \text{under policy~$\generic$ between now and its completion}\end{tabular}}.
    \end{align*}
\end{itemize}

We are now ready to state the waiting and residence time formulas
for the policies we consider.

\begin{proposition}[\textnormal{%
    special case of \cite[Theorem~5.5]{scully_soap_2018}}]
    \label{thm:soap}
    Consider an M/G/1 under policy $\generic \in \policies$.
    \begin{enumerate}[(a)]
    \item
        The expected waiting time of a (tagged) job of estimated size~$z$ is
        \begin{equation*}
            \E{\Twait_\generic(z)}
            = \frac{\lambda}{2} \frac{u_\generic(z)}{(1 - \rho_Z(z))^2}.
        \end{equation*}
    \item
        The expected residence time of a (tagged) job of true size~$s$
        and estimated size~$z$ is
        \begin{equation*}
            \E{\Tres_\generic(s, z)}
            = \int_0^s \frac{1}{1 - \rho_Z(\worst{\generic}(s, z, a))} \d{a}.
        \end{equation*}
    \end{enumerate}
\end{proposition}

Some intuition for the formulas above is warranted.
We explain one simple case below,
referring the reader to \citet[Section~4]{scully_soap_2018}
for more discussion.

\begin{example}
    \label{rmk:residence_busy_period}
    Consider the residence time of a job of true size~$s$ and estimated size~$z$ under PSJF-E.
    The job's rank is always~$z$ (see \cref{def:policy, fig:rank:psjf-e}),
    which means $\worst{\psjfe}(s, z, a) = z$.
    Applying \cref{thm:soap}(b) yields.
    \begin{equation}
        \label{eq:psjf-e_residence_s_z}
        \E{\Tres_\psjfe(s, z)} = \frac{s}{1 - \rho_Z(z)}.
    \end{equation}
    The intuitive interpretation of \cref{eq:psjf-e_residence_s_z} is as follows.
    During the tagged job's residence time, new jobs may arrive at any time,
    preempting the job in service if they have rank below~$z$.
    On average, the server spends a $\rho_Z(z)$ fraction of its time
    serving these new jobs of rank below~$z$,
    leaving a $1 - \rho_Z(z)$ fraction for serving the tagged job.
    This means that the tagged job's age increases
    at average rate $1/(1 - \rho_Z(z))$,
    so it takes $s/(1 - \rho_Z(z))$ time to go from age~$0$ to age~$s$.
\end{example}

Formulas very similar to those in \cref{thm:soap} hold for SRPT and PSJF.
In fact, such formulas are classic results
\cite{schrage_queue_1966, harchol-balter_performance_2013}.
One may view the SRPT and PSJF formulas
as special cases of the SRPT-SE and PSJF-E formulas
in a system where $S = Z$ for all jobs,
so, for example, we can write them using $\rho_S(s)$ instead of $\rho_Z(z)$.
We omit the exact statements because in our proofs,
we end up analyzing SRPT and PSJF with the work integral method.

\subsection{Useful Lemmas}

The following simple lemmas will be useful in our later analyses.

\begin{lemma}
    \label{thm:rho_derivative}
    \begin{align*}
        \dd{s} \rho_S(s) &= \lambda s f_S(s), &
        \dd{z} \rho_Z(z) &= \lambda \E{S \given Z = z} f_Z(z).
    \end{align*}
\end{lemma}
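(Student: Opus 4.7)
The plan is to prove both identities by rewriting each expectation as an integral over the appropriate marginal density and then invoking the fundamental theorem of calculus. Since $f_{S,Z}$ is a joint density by assumption (and hence $S$ and $Z$ each have densities $f_S$ and $f_Z$), every integrand below is well-defined and Lebesgue-integrable.

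For the first identity, I would start from the definition $\rho_S(s) = \lambda \E{S \1(S \leq s)}$ and express the expectation as an integral against the marginal density of~$S$, obtaining
\begin{equation*}
    \rho_S(s) = \lambda \int_0^s t\, f_S(t) \d{t}.
\end{equation*}
Differentiating with respect to~$s$ via the fundamental theorem of calculus immediately yields $\dd{s}\rho_S(s) = \lambda s f_S(s)$.

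For the second identity, the idea is the same but one extra conditioning step is needed because the quantity inside the expectation involves~$S$ while the indicator involves~$Z$. I would write, using the tower property,
\begin{equation*}
    \rho_Z(z) = \lambda \E{S \1(Z \leq z)} = \lambda \E{\E{S \given Z} \1(Z \leq z)} = \lambda \int_0^z \E{S \given Z = y}\, f_Z(y) \d{y},
\end{equation*}
and then differentiate with respect to~$z$ by the fundamental theorem of calculus to conclude $\dd{z}\rho_Z(z) = \lambda \E{S \given Z = z} f_Z(z)$.

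There is no real obstacle here: the only mildly subtle point is the conditioning step in the second identity, which is needed because $S$ and $Z$ are not independent (indeed, $(\beta,\alpha)$-boundedness forces strong dependence). I would simply note that the existence of the joint density $f_{S,Z}$ makes $\E{S \given Z = y}$ well-defined for almost every~$y$, which suffices for the integrand and the differentiation step.
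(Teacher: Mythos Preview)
Your proposal is correct and follows exactly the same approach as the paper: express each expectation as an integral against the relevant marginal density (using conditioning on~$Z$ for the second identity) and then differentiate via the fundamental theorem of calculus. The paper's proof is essentially a two-line version of what you wrote.
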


\begin{proof}
    These follow from \cref{eq:rho}
    and the fact that we can write
    \begin{align*}
        \E{S \1(S \leq s)} &= \int_0^s s' f_S(s') \d{s'} &
        \E{S \1(Z \leq z)} &= \int_0^z \E{S \given Z = z'} f_Z(z') \d{z'}.
        \qedhere
    \end{align*}
\end{proof}

\begin{lemma}
    \label{thm:psjf-e_residence}
    The mean residence time of PSJF-E is
    \begin{equation*}
        \E{\Tres_\psjfe} = \gp*{\frac{1}{\rho} \ln\frac{1}{1 - \rho}}\E{S}.
    \end{equation*}
\end{lemma}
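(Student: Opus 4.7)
The plan is to start from the formula for the expected residence time conditional on the tagged job's true and estimated sizes, derived in Remark~\ref{rmk:residence_busy_period} as a consequence of Proposition~\ref{thm:soap:residence}: since $\worst{\psjfe}(s,z,a) = z$ for all $a \in [0,s)$, we have $\E{\Tres_\psjfe(s,z)} = s/(1 - \rho_Z(z))$. Unconditioning via \cref{eq:tagged_to_mean}, the mean residence time becomes
\begin{equation*}
    \E{\Tres_\psjfe} = \int_0^\infty \nestint_0^\infty \frac{s}{1 - \rho_Z(z)}\, f_{S,Z}(s,z)\, \d{s}\, \d{z}.
\end{equation*}

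Next, I would carry out the inner integral over $s$. Since the factor $1/(1-\rho_Z(z))$ depends only on $z$, the $s$-integral collapses to $\int_0^\infty s\, f_{S,Z}(s,z)\, \d{s} = \E{S \given Z = z}\, f_Z(z)$. This leaves
\begin{equation*}
    \E{\Tres_\psjfe} = \int_0^\infty \frac{\E{S \given Z = z}\, f_Z(z)}{1 - \rho_Z(z)}\, \d{z}.
\end{equation*}
The key step is to recognize the numerator via \cref{thm:rho_derivative}: $\dd{z}\rho_Z(z) = \lambda\, \E{S \given Z = z}\, f_Z(z)$. Substituting and changing variables to $u = \rho_Z(z)$, which ranges from $0$ to $\rho$ as $z$ ranges over $[0,\infty)$, yields
\begin{equation*}
    \E{\Tres_\psjfe} = \frac{1}{\lambda}\int_0^\rho \frac{\d{u}}{1 - u} = \frac{1}{\lambda}\ln\frac{1}{1 - \rho}.
\end{equation*}

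Finally, I would use $\rho = \lambda \E{S}$ to rewrite $1/\lambda = \E{S}/\rho$, giving the stated identity. There is no real obstacle here: the only subtlety is verifying that $\rho_Z$ is a valid change-of-variables (it is continuous and nondecreasing with $\rho_Z(0) = 0$ and $\lim_{z \to \infty} \rho_Z(z) = \lambda\E{S} = \rho$, so the substitution absorbs the entire mass of $Z$). The result then drops out in one line.
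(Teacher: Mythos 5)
Your proposal is correct and follows essentially the same route as the paper: condition on $(s,z)$ via \cref{thm:soap:residence} and \cref{rmk:residence_busy_period}, integrate out $s$, and then recognize $\lambda\E{S\given Z=z}f_Z(z)$ as $\rho_Z'(z)$ from \cref{thm:rho_derivative} to evaluate the final integral. Your explicit change-of-variables $u=\rho_Z(z)$ (with the check that $\rho_Z$ runs from $0$ to $\rho$) is a slightly more spelled-out version of what the paper compresses into a single cite of \cref{thm:rho_derivative}.
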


\begin{proof}
    Combining \cref{eq:psjf-e_residence_s_z, thm:rho_derivative} yields
    \begin{align*}
        \E{\Tres_\psjfe}
        &= \int_0^\infty \nestint_0^\infty \E{\Tres_\psjfe(s, z)} f_{S, Z}(s, z) \d{s} \d{z}
        \by{conditioning on $s$ and~$z$} \\
        &= \int_0^\infty \nestint_0^\infty \frac{s}{1 - \rho_Z(z)} f_{S, Z}(s, z) \d{s} \d{z}
        \by{\cref{eq:psjf-e_residence_s_z}} \\
        &= \int_0^\infty \frac{\E{S \given Z = z}}{1 - \rho_Z(z)} f_Z(z) \d{z} \\
        &= \frac{1}{\lambda} \ln\frac{1}{1 - \rho}.
        \by{\cref{thm:rho_derivative}}
    \end{align*}
    The lemma follows from the fact that $\rho = \lambda \E{S}$.
\end{proof}

The main reason that \cref{thm:psjf-e_residence} is useful
is the following result of \citet{wierman_nearly_2005},
which shows that the mean residence time of PSJF-E is
a \emph{lower bound} on the mean response time of SRPT.

\begin{proposition}[\textnormal{\cite[Theorem~5.8]{wierman_nearly_2005}}]
    \label{thm:srpt_log}
    The mean response time of SRPT is bounded below by
    \begin{equation*}
        \E{T_\srpt} \geq \gp*{\frac{1}{\rho} \ln\frac{1}{1 - \rho}} \E{S}.
    \end{equation*}
\end{proposition}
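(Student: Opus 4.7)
The plan is to leverage the work-integral machinery of this paper together with \cref{thm:psjf-e_residence} to reduce the claim to an explicit integral inequality. By \cref{thm:psjf-e_residence} the target bound equals $\E{\Tres_\psjfe}$, a quantity that depends on $(S,Z)$ only through $\rho$ and $\E{S}$. Using \cref{thm:rho_derivative} and the change of variable $t = \rho_S(s)$, I would rewrite the target in the convenient integral form
\begin{equation*}
    \gp*{\frac{1}{\rho}\ln\frac{1}{1-\rho}}\E{S} = \int_0^\infty \frac{s f_S(s)}{1-\rho_S(s)}\,ds.
\end{equation*}

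The first main step is to produce a matching integral expression for $\E{T_\srpt}$. Substituting the closed form $\E{W_\srpt(\remsize \leq r)} = \lambda \E{\min(S,r)^2}/(2(1-\rho_S(r)))$ (a Pollaczek--Khinchine-style identity for the high-priority subsystem of jobs with remaining size at most $r$ under SRPT) into \cref{thm:response_time_work_integral} yields
\begin{equation*}
    \E{T_\srpt} = \int_0^\infty \frac{\E{\min(S,r)^2}}{2 r^2 (1-\rho_S(r))}\,dr,
\end{equation*}
which agrees with the classical Schrage--Miller formula and reproduces the known M/D/1 value as a sanity check.

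The remaining step, and the main obstacle, is the integral inequality
\begin{equation*}
    \int_0^\infty \frac{1}{1-\rho_S(s)}\gp*{\frac{\E{\min(S,s)^2}}{2 s^2} - s f_S(s)}\,ds \geq 0.
\end{equation*}
The bracketed integrand is not pointwise nonnegative (for instance, for exponential sizes it is positive near $s=0$, becomes negative in an intermediate range, and decays to zero at infinity), so pointwise comparison cannot work. I would introduce the antiderivative $G(s) = \int_0^s [\E{\min(S,t)^2}/(2t^2) - t f_S(t)]\,dt$ and verify by a Fubini argument that $G(0) = G(\infty) = 0$ and that $G$ admits the compact representation $G(s) = \bigl(s^2 \bar{F}_S(s) - \int_0^s t^2 f_S(t)\,dt\bigr)/(2s)$. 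Integrating the target by parts reduces the claim to showing that $\int_0^\infty G(s) \cdot \lambda s f_S(s)/(1-\rho_S(s))^2\,ds$ is nonpositive. The technical heart of the argument, which I expect to be the hardest part, is to exploit that $G$ has a single sign change (nonnegative on an initial interval, nonpositive thereafter), combined with the monotonicity of $1/(1-\rho_S(s))$ in $s$, via a Chebyshev-style rearrangement against the measure $d\rho_S$ to conclude that the weighted integral has the required sign.
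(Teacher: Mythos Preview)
The paper does not supply its own proof of \cref{thm:srpt_log}; it simply cites \citep[Theorem~5.8]{wierman_nearly_2005} as an external result. So there is no in-paper argument to compare against, and your proposal is effectively an independent proof attempt.

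Your setup is sound. The identity $\E{W_\srpt(\remsize\le r)}=\lambda\E{\min(S,r)^2}/(2(1-\rho_S(r)))$ is correct (it follows from \cref{thm:soap} applied to SRPT together with the relation $\E{\Twait_\srpt(s)}=\E{W_\srpt(\remsize\le s)}/(1-\rho_S(s))$), and plugging it into \cref{thm:response_time_work_integral} gives the formula you state. Your integration-by-parts reduction is also correct: one checks directly that $G(s)=\bigl(s^2\bar F_S(s)-\int_0^s u^2 f_S(u)\,du\bigr)/(2s)$, and a short Fubini computation gives $\int_0^\infty G(s)\,s f_S(s)\,ds=0$, which is exactly the normalization your Chebyshev step needs.

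The genuine gap is the single-sign-change claim for $G$. It fails in general. Take $S$ close to the two-point law $\Pr(S=1)=0.9$, $\Pr(S=100)=0.1$ (smoothed slightly to fit the continuous framework). Then the numerator $P(s)=s^2\bar F_S(s)-\int_0^s u^2 f_S(u)\,du$ satisfies $P(1^-)\approx 1$, $P(1^+)\approx -0.8$, $P(3)\approx 0$, $P(100^-)\approx 999$, and $P(100^+)\approx -1001$: three sign changes, not one. With multiple sign changes the Chebyshev/rearrangement step no longer yields the desired sign, so the argument as outlined does not close. The underlying inequality is still true (it is, after all, a theorem), but you would need either a different decomposition of the integrand or a more refined variation-diminishing argument than a single crossing; the plan as written does not get there.
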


\section{SRPT with Estimates (SRPT-E)}
\label{sec:srpt-e}

Our first result shows that, for $(\beta,\alpha)$-bounded size estimates with $\beta < 1$, the performance of SRPT-E can lead to arbitrarily large approximation ratios.  This formalizes previous empirical results (see e.g. \cite{mitzenmacher_scheduling_2020}), where it was noted that underestimates of large jobs, particularly when job sizes are highly variable, can lead to poor performance for SRPT-E, as a large underestimated job being served can obtain a  negative estimated remaining time and block service for all other jobs, even when the actual remaining time is large.  This result motivates our seeking a variation of SRPT that avoids this problem, namely SRPT-B, and our examination of PSJF-E, which we show in contrast has bounded approximation ratio for $(\beta,\alpha)$-bounded size estimates.

\restate{thm:srpt-e_result}

\begin{proof}
    For a given job distribution $S$, let $Z = \beta S$.

    From \cref{thm:soap}(a),
    we know that the expected waiting time of SRPT-E
    is
    \begin{align*}
        \E{\Twait_\srpte(z)}
            &= \frac{\lambda}{2} \frac{\E[\Big]{\gp[\Big]{{\footnotesize \begin{tabular}{@{}l@{}} \text{amount of service time during which} \\[-0.05em] \text{a job has rank~$z$ or less under SRPT-E} \end{tabular}}}^2}}{(1 - \rho_Z(z))^2}\\
            &\ge
        \frac{\lambda}{2}
        \E[\Big]{\gp[\Big]{{\footnotesize \begin{tabular}{@{}l@{}} \text{amount of service time during which} \\[-0.05em] \text{a job has rank~0 or less under SRPT-E} \end{tabular}}}^2}.
    \end{align*}
    Note that this lower bound applies regardless of $z$.

    Because $Z = \beta S$,
    a job of size $s$ starts at rank $\beta s$,
    and reaches rank 0 at age $\beta s$.
    Therefore, it receives $(1-\beta)s$ service with rank $\le 0$.
    Therefore, we can lower bound waiting time explicitly:
    \begin{align*}
        \E{\Twait_\srpte} \ge \frac{\lambda}{2} \E[\big]{\gp[\big]{(1-\beta)S}^2}.
    \end{align*}
    From this result, we see that SRPT-E's response time grows with $\E{S^2}$, which can be arbitrarily large or infinite. In contrast, the response time of SRPT is finite even for job size distributions with infinite $\E{S^2}$,
    such as a Pareto distribution with exponent $1.5$.
\end{proof}

\begin{remark}
    While the proof of \cref{thm:srpt-e_result}
    assumes $Z = \beta S$ for simplicity,
    essentially the same result holds
    for any error distribution $Z$
    that is multiplicatively smaller than $S$ with positive probability.
    In particular, suppose that there exists $\beta' < 1$ such that for all~$s$,
    we have $\P{Z \le \beta' S \given S = s} \geq p$.
    Then by essentially the same argument as \cref{thm:srpt-e_result},
    \begin{align*}
        \E{\Twait_\srpte}
        \geq \frac{\lambda p}{2} \E[\big]{\gp[\big]{(1-\beta')S}^2}.
    \end{align*}
    In particular, this means it is possible for SRPT-E
    to have unbounded approximation ratio even when size estimates are unbiased,
    namely when $\E{Z \given S = s} = s$ for all~$s$.
\end{remark}

\section{SRPT with Scaling Estimates (SRPT-SE)}
\label{sec:srpt-se}

SRPT-SE is a policy that uses a job's true size and estimated size to assign its rank.
SRPT-SE is thus not a practical policy,
as one would prefer SRPT if true sizes were known.
However, analyzing it is helpful for a few reasons.
First, it is a useful warmup for the analysis of PSJF-E,
which follows the same outline but is somewhat more complicated
(see \cref{sec:psjf-e}).
Second, it is the first step of analyzing SRPT-B,
whose performance we bound relative to SRPT-SE
(see \cref{sec:srpt-b}).
Third, there are settings in which a policy similar to SRPT-SE,
which enjoys similarly good performance,
could be implemented in practice
(see \cref{sec:continuous-estimate}).

Our main tool for analyzing SRPT-SE is \cref{thm:response_time_work_integral},
which expresses mean response time in terms of mean $(\remsize \leq r)$-work,
with less $(\remsize \leq r)$-work corresponding to lower response time.

Before analyzing SRPT-SE, it is helpful to consider
how one might use \cref{thm:response_time_work_integral} to show that
SRPT minimizes mean response time.
The key is that for every value of~$r$,
SRPT minimizes mean $(\remsize \leq r)$-work,
or equivalently mean $(\rank{\srpt} \leq r)$-work.
The intuition is that
whenever the system has nonzero $(\rank{\srpt} \leq r)$-work,
SRPT serves a job of rank $r$ or less,
thus reducing the amount of $(\rank{\srpt} \leq r)$-work.

It turns out that an analogous property holds
for any policy that can be defined using a rank function \cite{scully_soap_2018},
including those in \cref{def:policy}.

\begin{proposition}[\textnormal{%
    very similar to \cite[Theorem~VII.7]{scully_gittins_2021}\protect\footnotemark}]
    \label{thm:soap_match_minimizes_work}
    Consider a policy $\generic \in \policiesWithTrue$
    and any rank~$r$.
    In the M/G/1, the policy that minimizes
    the mean amount of steady-state $(\rank{\generic} \leq r)$-work
    is $\generic$ itself.
    That is, for any policy~$\genericOther$,
    \begin{equation*}
        \E{W_{\generic}(\rank{\generic} \leq r)}
        \leq \E{W_{\genericOther}(\rank{\generic} \leq r)}.
    \end{equation*}
\end{proposition}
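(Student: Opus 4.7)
The plan is a sample-path coupling argument. Couple two M/G/1 systems, one operating under $\pi$ and one under an arbitrary policy $\pi'$, so that they share the same arrival process: identical arrival times and the same joint samples $(s_j, z_j)$ for every arriving job. I will show pathwise that
\[
    W_{\pi}(\rank{\pi} \leq r)(t) \leq W_{\pi'}(\rank{\pi} \leq r)(t)
\]
for every $t \geq 0$, starting from a common empty initial state, and then take expectations under the stationary distribution to conclude.

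The heart of the argument is that $\pi$ is the maximally aggressive policy at depleting $(\rank{\pi} \leq r)$-work. Indeed, whenever $W_{\pi}(\rank{\pi} \leq r)(t) > 0$, at least one job in the $\pi$-system has current rank at most $r$; because $\pi$ serves the job of minimum $\rank{\pi}$, that minimum is at most $r$, so $\pi$ is serving a job currently in the $\rank{\pi} \leq r$ region and thereby depleting $(\rank{\pi} \leq r)$-work at rate exactly $1$. Under any $\pi'$, $(\rank{\pi} \leq r)$-work is depleted at rate at most $1$, with equality only when $\pi'$ happens to serve a job currently in the region. Since the coupled arrivals contribute the same positive jumps to both processes (the initial $\phi$-work of a newly arrived job depends only on its state), the gap $W_{\pi'} - W_{\pi}$ starts at $0$ and its derivative is nonnegative on any interval where $W_\pi > 0$. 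A standard induction over the (countably many) event times---arrivals, service switches, and moments when a job's age carries it across the boundary of the $\rank{\pi} \leq r$ region---then yields the pathwise comparison.

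The main delicacy will be handling the fact that $(\rank{\pi} \leq r)$-work can evolve non-monotonically during service: for rank functions like SRPT-B's, a job's rank can first decrease below $r$ and later rise back above $r$, so the job's $\phi$-work can jump upward the moment its age carries it into the $\rank{\pi} \leq r$ region from outside. Such rank-crossing events mean $W$ can increase during service, not just at arrival instants. I would handle this by verifying that any upward jump in $W_\pi$ from a rank crossing can only occur while $W_\pi = 0$ just before the jump; otherwise, $\pi$ would already be serving a job strictly inside the region and no other job's age would be advancing to cross the boundary. Consequently, the inequality $W_\pi \leq W_{\pi'}$ is preserved through the jump since $W_{\pi'} \geq 0$ automatically. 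Analogous upward jumps under $\pi'$ only enlarge $W_{\pi'}$, which is also consistent with the desired direction. Taking stationary expectations of the pathwise inequality gives $\E{W_\pi(\rank{\pi} \leq r)} \leq \E{W_{\pi'}(\rank{\pi} \leq r)}$, completing the proof.
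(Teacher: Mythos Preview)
Your sample-path coupling is the standard route and matches what the paper relies on: the paper does not give its own proof here but defers to the cited reference, remarking that the same argument carries over to any rank-function policy. For the five policies whose rank is nonincreasing in age---$\srpt$, $\psjf$, $\srpte$, $\psjfe$, $\srptse$---your argument goes through cleanly: a job's $(\rank{\pi}\le r)$-work never increases during service, the only inflow to $W$ is from coupled arrivals, and whenever $W_\pi>0$ the minimum-rank job is in the region, so $\pi$ depletes $W_\pi$ at unit rate.

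The gap is in the $\srptb$ case. You correctly observe that an upward jump of $W_\pi$ can occur only when $W_\pi=0$ just before (because otherwise $\pi$ is serving an in-region job and no other job's age is moving). But your conclusion that ``the inequality $W_\pi \le W_{\pi'}$ is preserved through the jump since $W_{\pi'}\ge 0$ automatically'' does not follow: immediately \emph{after} the jump $W_\pi$ equals the region length of the entering job, while $W_{\pi'}$ may still be $0$. A concrete instance: two jobs arrive nearly together with estimates $z_A\ll z_B$, both initial ranks above~$r$; $\srptb$ serves $A$, and when $A$'s age reaches $z_A-r$ we get $W_\pi>0=W_{\pi'}$ for any $\pi'$ that has been serving~$B$. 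So the pathwise domination genuinely fails for $\srptb$, and your induction does not establish the mean inequality for that policy. Repairing this would require either a different coupled potential (one nonincreasing along service for \emph{all} jobs, not just in-region ones) or a direct averaging argument that bypasses sample-path comparison. That said, the paper only invokes the proposition with $\pi\in\{\srptse,\psjfe\}$, both monotone, so your proof fully supports every use the paper actually makes of it.
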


\footnotetext{%
    While \citet[Theorem~VII.7]{scully_gittins_2021}
    consider a specific policy, namely a generalization of SRPT,
    the same proof applies virtually verbatim
    to any policy that can be defined by a rank function
    \cite{scully_soap_2018}.}

Because $\remsizee = \rank{\srptse}$,
we have from \cref{thm:soap_match_minimizes_work} that
SRPT-SE minimizes mean $(\remsizee \leq r)$-work.
But \cref{thm:response_time_work_integral} gives mean response time
in terms of mean $(\remsize \leq r)$-work, not mean $(\remsizee \leq r)$-work.
Fortunately, we can leverage the fact that we have
$(\beta, \alpha)$-bounded size estimates
to relate $(\remsize \leq r)$-work to $(\remsizee \leq r)$-work,
yielding the following result.

\begin{theorem}[Performance of SRPT-SE]
    \label{thm:srpt-se_result}
    Consider the M/G/1 with $(\beta, \alpha)$-bounded size estimates.
    \begin{enumerate}[(a)]
    \item
        The mean response time of SRPT-SE is bounded above by
        \begin{equation*}
            \E{T_\srptse} \leq \frac{\alpha}{\beta} \E{T_\srpt},
        \end{equation*}
    \item
        The approximation ratio of SRPT-SE is at most~$\alpha/\beta$.
    \item
        As $\alpha$ and~$\beta$ converge to~$1$,
        the approximation ratio of SRPT-SE converges to~$1$
        uniformly in the arrival rate
        and the joint distribution of true and estimated sizes.
    \end{enumerate}
\end{theorem}

\begin{proof}
    It clearly suffices to prove~(a).
    Recall the following facts about job states~$x$:
    \begin{itemize}
        \item $\rank{\srpt}(x) = \remsize(x)$,
        \item $\rank{\srptse}(x) = \remsizee(x)$, and
        \item $\remsizee(x)/\remsize(x) = \sizee(x)/\size(x) \in [\beta, \alpha]$.
    \end{itemize}
    Using the above facts together with
    \cref{thm:response_time_work_integral, thm:soap_match_minimizes_work},
    we compute
    \begin{align*}
        \MoveEqLeft
        \E{T_\srptse} \\
        &= \frac{1}{\lambda} \int_0^\infty \frac{\E{W_\srptse(\remsize \leq r)}}{r^2} \d{r}
        \by{\cref{thm:response_time_work_integral}} \\
        &\leq \frac{1}{\lambda} \int_0^\infty \frac{\E{W_\srptse(\remsizee \leq \alpha r)}}{r^2} \d{r}
        \by{using $\remsizee(x)/\remsize(x) \leq \alpha$} \\
        &\leq \frac{1}{\lambda} \int_0^\infty \frac{\E{W_\srpt(\remsizee \leq \alpha r)}}{r^2} \d{r}
        \by{\cref{thm:soap_match_minimizes_work}} \\
        &\leq \frac{1}{\lambda} \int_0^\infty \frac{\E[\big]{W_\srpt\gp[\big]{\remsize \leq \frac{\alpha}{\beta} r}}}{r^2} \d{r}
        \by{using $\remsizee(x)/\remsize(x) \geq \beta$}\\
        &= \frac{\alpha}{\beta} \frac{1}{\lambda} \int_0^\infty \frac{\E{W_\srpt(\remsize \leq r')}}{(r')^2} \d{r'}
        \by{setting $r' = \alpha / \beta \cdot r$} \\
        &= \frac{\alpha}{\beta} \E{T_\srpt}.
        \by{\cref{thm:response_time_work_integral}}
        \qedhere
    \end{align*}
\end{proof}

\Cref{thm:srpt-se_result} completes our analysis of SRPT-SE.  We describe a related result for a similar scheme that does not use the job's true size in \cref{sec:continuous-estimate}.

\section{SRPT with Bounce (SRPT-B)}
\label{sec:srpt-b}

As we have mentioned, previous works have noted that when using SRPT-E, large jobs that are underestimated will have estimated remaining sizes that become negative while the true remaining time is still relatively large, leading to long waiting times for jobs stuck behind them.  An open question has been to modify SRPT with estimated sizes in a way that avoids this issue in a robust manner, without assumptions on job size distributions.  Our suggested solution, SRPT with Bounce (SRPT-B), handles this by modifying the rank function from $z-a$ to $\min\{|z-a|,z\}$.\footnote{We note that it is an interesting open question to consider the effects of other possible forms for the bounce, which we do not investigate here.}  We prove the following results for SRPT-B.

\restate{thm:srpt-b_result}

Our overall approach to analyzing SRPT-B is to compare it to SRPT-SE.
We separately compare the waiting times and residence times of the two policies
(see \cref{sec:proof_overview:soap}).
Both comparisons boil down to comparing
the amount of time a job spends above or below a given rank under each policy.
We begin with the residence time comparison
(see \cref{sec:srpt-b:residence})
before moving on to the more complicated waiting time comparison
(see \cref{sec:srpt-b:waiting}).
Combining the two comparisons and some additional computation
(see \cref{sec:srpt-b:response})
yields \cref{thm:srpt-b_result}.

\subsection{Residence Time Difference between SRPT-B and SRPT-SE}
\label{sec:srpt-b:residence}

By \cref{thm:soap}(b),
the expected residence time of the job under SRPT-B, SRPT-SE, or PSJF-E is
an integral of $1/(1 - \rho_Z(\cdot))$ terms over the job's ages,
where the value plugged is the worst future rank of the job.

Consider a job of true size~$s$ and estimated size~$z$.
In order to bound $\E{\Tres_\srpt(s, z)}$,
we will find functions $g_{s, z}(\cdot), h_{s, z}(\cdot)$
and a value $c_{s, z} > 0$ such that
\begin{align}
    \label{eq:g_integral}
    \int_0^{(1 + c_{s, z})s} \frac{1}{1 - \rho_Z(g_{s, z}(a))} \d{a} &= \E{\Tres_\srptb(s, z)} + c_{s, z}s, \\
    \label{eq:h_integral}
    \int_0^{(1 + c_{s, z})s} \frac{1}{1 - \rho_Z(h_{s, z}(a))} \d{a} &= \E{\Tres_\srptse(s, z)} + c_{s, z}\E{\Tres_\psjfe(s, z)}, \\
    \label{eq:g_leq_h}
    g_{s, z}(a) &\leq h_{s, z}(a) \quad \text{for all~$a \in (0, (1 + c_{s, z})s)$.}
\end{align}
Because $1/(1 - \rho_Z(\cdot))$ is nondecreasing, this implies
\begin{equation*}
    \E{\Tres_\srptb(s, z)} \leq \E{\Tres_\srptse(s, z)} + c_{s, z}(\E{\Tres_\psjfe(s, z)} - s).
\end{equation*}
Finally, we will bound $c_{s, z}$ by a value~$c$ which is independent of $s$ and~$z$,
obtaining
\begin{equation}
    \label{eq:srpt-b_residence_goal}
    \E{\Tres_\srptb} \leq \E{\Tres_\srptse} + c(\E{\Tres_\psjfe} - \E{S}).
\end{equation}

We begin by computing the worst future ranks of each policy.

\begin{lemma}
    \label{thm:worst_future_rank}
    The worst future ranks of a job of true size~$s$, estimated size~$z$,
    and age~$a$ under SRPT-B, SRPT-SE, and PSJF-E are
    \begin{align*}
        \worst{\srptb}(s, z, a) &= \max\{z - a, \min\{s - z, z\}\}, \\
        \worst{\srptse}(s, z, a) &= \rank{\srptse}(s, z, a) = \frac{z}{s}(s - a), \\
        \worst{\psjfe}(s, z, a) &= \rank{\psjfe}(s, z, a) = z.
    \end{align*}
\end{lemma}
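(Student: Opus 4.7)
The proof is purely a case analysis using the definition $\worst{\generic}(s, z, a) = \sup_{b \in [a, s)} \rank{\generic}(s, z, b)$ and the explicit rank functions from \cref{def:policy}. I would dispatch the easy policies first, then do the SRPT-B case with care.

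For PSJF-E, the rank function $\rank{\psjfe}(s, z, b) = z$ is constant in $b$, so the supremum is trivially $z$. For SRPT-SE, the rank function $\rank{\srptse}(s, z, b) = (z/s)(s - b)$ is strictly decreasing in $b$ (since $z, s > 0$), so the supremum over $b \in [a, s)$ is attained at $b = a$, giving $(z/s)(s - a)$. In both cases $\worst{\generic}(s, z, a)$ equals the current rank, since the rank never rises above its value at age~$a$.

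The interesting case is SRPT-B, whose rank function $\rank{\srptb}(s, z, b) = \min\{|b - z|, z\}$ is piecewise linear in $b$ with the shape described in \cref{fig:rank:srpt-b}: it decreases from $z$ at $b = 0$ down to $0$ at $b = z$, then increases back up to $z$ at $b = 2z$, and stays flat at $z$ for $b \geq 2z$. To take the supremum on $[a, s)$, I would split on how $s$ compares with $z$ and $2z$. If $s \leq z$, then on $[a, s)$ the rank function is the decreasing segment $z - b$, so the sup is $z - a$; and indeed $\min\{s - z, z\} = s - z \leq 0 \leq z - a$, so the claimed formula $\max\{z - a, \min\{s - z, z\}\} = z - a$ holds. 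If $z < s \leq 2z$, the rank on $[a, s)$ is bounded above by the two ``endpoint'' values, namely $z - a$ (if $a \leq z$, else $0$) from the decreasing segment and $\sup_{b \to s^-}(b - z) = s - z$ from the increasing segment; combining gives $\max\{z - a, s - z\}$, matching the formula since $\min\{s - z, z\} = s - z$ when $s \leq 2z$. Finally, if $s > 2z$, then $b = 2z$ lies in $[a, s)$ (at least for $a \leq 2z$, which together with $z - a \leq z$ forces the sup to be $z$), and for $a > 2z$ the rank is identically $z$ on $[a, s)$; in all subcases the sup is $z$, matching $\max\{z - a, z\} = z$ since $\min\{s - z, z\} = z$ here.

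The only nuisance is bookkeeping the subcases on $a$ within each regime for $s$, and checking that the two-operand $\max\{z - a, \min\{s - z, z\}\}$ really collapses to the correct value in each. Writing out the three regimes for $s$ and verifying the formula in each subcase of $a$ is routine; I do not foresee any genuine obstacle.
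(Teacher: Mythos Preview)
Your proposal is correct and follows essentially the same route as the paper: the paper also dispatches SRPT-SE and PSJF-E by noting their ranks are nonincreasing in age (hence the worst future rank equals the current rank), and then handles SRPT-B by the same three-way case split on whether $s \leq z$, $z < s \leq 2z$, or $s > 2z$. If anything, your write-up is more explicit than the paper's in verifying that the closed-form $\max\{z - a, \min\{s - z, z\}\}$ agrees with the supremum in each subcase.
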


\begin{proof}
    SRPT-SE and PSJF-E have nondecreasing rank as a function of age~$a$,
    so the job's worst future rank is its current rank.
    If $s \leq z$, then the same is true for SRPT-B.
    If instead $z < s \leq 2z$, then under SRPT-B,
    the job's worst future rank is its current rank~$z - a$
    until age $a = s - 2(s - z)$,
    after which the worst future rank is its final rank, namely $s - z$.
    Finally, if $s > 2z$,
    then the job's worst future rank is always its final rank, namely~$z$.
\end{proof}

\begin{lemma}
    \label{thm:g_h_exist}
    The following definitions satisfy
    \cref{eq:g_integral, eq:h_integral, eq:g_leq_h}:
    \begin{align*}
        c_{s, z} &= \min\curlgp[\Big]{\vert[\Big]{1 - \frac{s}{z}}, 1}, \\
        g_{s, z}(a) &= \begin{cases}
            \max\{z - a, \min\{s - z, z\}\} & \text{if } a \leq s \\
            0 & \text{if } a > s,
        \end{cases} \\
        h_{s, z}(a) &= \begin{cases}
            z & \text{if } a \leq c_{s, z}s \\
            \frac{z}{s}(s - (a - c_{s, z}s)) & \text{if } a > c_{s, z}s.
        \end{cases}
    \end{align*}
\end{lemma}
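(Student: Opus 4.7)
My plan is to verify the three defining conditions \eqref{eq:g_integral}, \eqref{eq:h_integral}, and \eqref{eq:g_leq_h} in turn. The first two are essentially computational and fall out of the piecewise definitions together with \cref{thm:soap:residence}; the genuine content is the pointwise inequality \eqref{eq:g_leq_h}, which I would handle by case analysis on the sign of $z - s$.

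For \eqref{eq:g_integral}, I would observe that on $[0, s]$ the function $g_{s, z}$ agrees with $\worst{\srptb}(s, z, a)$ as computed in \cref{thm:worst_future_rank}, so \cref{thm:soap:residence} identifies this piece of the integral with $\E{\Tres_\srptb(s, z)}$. On $(s, t_{s, z}]$ we have $g_{s, z} \equiv 0$ and $\rho_Z(0) = 0$, so the integrand is $1$ and contributes exactly $t_{s, z} - s = c_{s, z} s$. For \eqref{eq:h_integral}, the two pieces of $h_{s, z}$ are designed to mimic the two target summands: on $[0, c_{s, z} s]$, $h_{s, z} \equiv z$, so by \eqref{eq:psjf-e_residence_s_z} this piece contributes $c_{s, z} s / (1 - \rho_Z(z)) = c_{s, z} \E{\Tres_\psjfe(s, z)}$; on $(c_{s, z} s, t_{s, z}]$, the substitution $b = a - c_{s, z} s$ identifies $h_{s, z}(a)$ with $\worst{\srptse}(s, z, b) = (z/s)(s - b)$ over $b \in (0, s]$, so a second appeal to \cref{thm:soap:residence} yields $\E{\Tres_\srptse(s, z)}$.

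For the pointwise bound \eqref{eq:g_leq_h}, I would split into the three regimes given by \cref{thm:worst_future_rank}. In the overestimate regime $s \leq z$ we have $c_{s, z} = 1 - s/z$ and $g_{s, z}(a) = z - a$ on $[0, s]$, so the comparison reduces to the affine function $h_{s, z} - g_{s, z}$ being nonnegative, which follows from checking its values at the breakpoints $c_{s, z} s$, $s$, and $t_{s, z}$. In the large-underestimate regime $s > 2z$ we have $c_{s, z} = 1$ and $g_{s, z} \equiv z \equiv h_{s, z}$ on $[0, s]$, with $g_{s, z} = 0 \leq h_{s, z}$ on $(s, 2s]$, so equality holds almost everywhere. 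The intermediate regime $z < s \leq 2z$ is where the bounce $g_{s, z}(a) = \max\{z - a, s - z\}$ actually kicks in, and is the main obstacle.

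What makes the intermediate case delicate is that the three breakpoints $c_{s, z} s = s^2/z - s$, $2z - s$, and $s$ can be interleaved in two different orders depending on whether $s \leq \sqrt{2}\, z$, so a priori one has to check several subcases. The key design feature making the comparison work uniformly is that $h_{s, z}$ is calibrated so that $h_{s, z}(s) = s - z$ matches $g_{s, z}(s)$ exactly; combined with the linearity of $h_{s, z}$ on $[c_{s, z} s, t_{s, z}]$ and the fact that $h_{s, z}(c_{s, z} s) = z \geq s - z$, this pins $h_{s, z} \geq s - z = g_{s, z}$ throughout the plateau $[2z - s, s]$. The remaining segments then reduce to the trivial bounds $h_{s, z} = z \geq g_{s, z}$ on $[0, c_{s, z} s]$ and $h_{s, z} \geq 0 = g_{s, z}$ on $(s, t_{s, z}]$, completing the verification.
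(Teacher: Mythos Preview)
Your approach matches the paper's: verify \eqref{eq:g_integral} and \eqref{eq:h_integral} from \cref{thm:soap:residence,thm:worst_future_rank}, then do a case analysis on the relative sizes of $s$ and $z$ for \eqref{eq:g_leq_h}. The paper's own proof is terse (``a simple case analysis, illustrated in \cref{fig:residence}''), so your write-up is in fact more explicit.

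There is, however, one missing sub-case in your intermediate regime $z < s \leq 2z$. You cover the plateau $[2z - s, s]$, the constant piece $[0, c_{s,z}s]$, and the tail $(s, t_{s,z}]$, but when $s < \sqrt{2}\,z$ the breakpoints are ordered $c_{s,z}s < 2z - s$, leaving the interval $(c_{s,z}s,\, 2z - s)$ unaccounted for. On that interval $g_{s,z}(a) = z - a$ is still descending while $h_{s,z}(a) = s - (z/s)a$ is already on its linear piece, so neither of your ``trivial bounds'' applies. The check is easy---one computes $h_{s,z}(a) - g_{s,z}(a) = (s - z)(1 + a/s) > 0$---but it does need to be stated to complete the verification.
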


\premidfigure
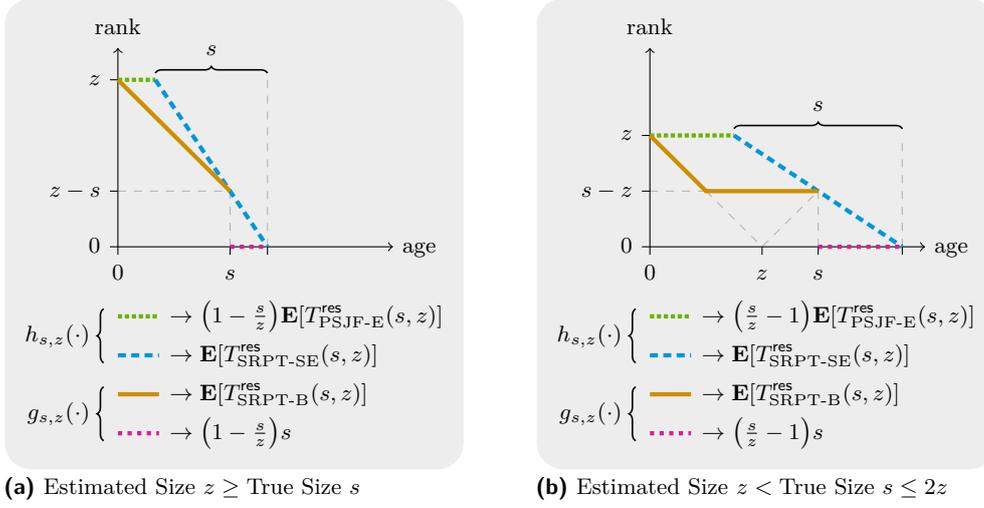
\begin{figure}
    \begin{subfigure}[t]{0.5\linewidth}
        \begin{tikzpicture}[figure]
    \newcommand{\z}{9}
    \newcommand{\s}{6}
    \newcommand{\diffexpr}{(\z-\s)*\s/\z}
    \newcommand{\diff}{{\diffexpr}}
    
    \yguide[$z$]{0}{\z}
    \yguide[$z - s$]{\s}{\z-\s}
    \xguide[$s$]{\s}{\z-\s}
    \xguide[\relax]{{\s+\diffexpr}}{\z}
    
    \draw[decorate, decoration={brace, raise=0.25em}]
      (\diff, \z) -- ++(\s, 0)
      node[midway, above=0.4em]
      {\vphantom{by}$s$};
    
    \axes{14}{10}{$0$}{age}{$0$}{rank}
    \draw[psjf-e, densely dotted] (0, \z) -- ++(\diff, 0);
    \draw[srpt-se, densely dashed] (\diff, \z) -- ++(\s, -\z);
    \draw[srpt-b] (0, \z) -- ++(\s, -\s);
    \draw[size, dotted] (\s, 0) -- ++(\diff, 0);
    
    \begin{scope}[shift={(0, -3.75)}]
        \newcommand{\spacing}{2.1}
        
        \draw[psjf-e, densely dotted] (0, 0) -- ++(2.2, 0);
        \node[right] at (1.9, 0) {\vphantom{by}${}\to \gp[\big]{1 - \frac{s}{z}} \E{\Tres_\psjfe(s, z)}$};
        \draw[srpt-se, densely dashed] (0, -\spacing) -- ++(2.2, 0);
        \node[right] at (1.9, -\spacing) {\vphantom{by}${}\to \E{\Tres_\srptse(s, z)}$};
        \draw[srpt-b] (0, -2*\spacing) -- ++(2.2, 0);
        \node[right] at (1.9, -2*\spacing) {\vphantom{by}${}\to \E{\Tres_\srptb(s, z)}$};
        \draw[size, dotted] (0, -3*\spacing) -- ++(2.2, 0);
        \node[right] at (1.9, -3*\spacing) {\vphantom{by}${}\to \gp[\big]{1 - \frac{s}{z}} s$};
        
        \draw[decorate, decoration={brace, mirror, raise=0.25em}]
            (-0.4, 0 + 0.4) -- ++(0, -\spacing - 2*0.4)
            node[midway, left=0.4em] {\vphantom{by}$h_{s, z}(\cdot)$};
        \draw[decorate, decoration={brace, mirror, raise=0.25em}]
            (-0.4, -2*\spacing + 0.4) -- ++(0, -\spacing - 2*0.4)
            node[midway, left=0.4em] {\vphantom{by}$g_{s, z}(\cdot)$};
    \end{scope}
\end{tikzpicture}
        \caption{Estimated Size~$z \geq {}$True Size~$s$}
    \end{subfigure}\hfill\begin{subfigure}[t]{0.5\linewidth}
        \begin{tikzpicture}[figure]
    \newcommand{\z}{6}
    \newcommand{\s}{9}
    \newcommand{\diffexpr}{(\s-\z)*\s/\z}
    \newcommand{\diff}{{\diffexpr}}
    
    \yguide[$z$]{0}{\z}
    \xguide[$z$]{\z}{0}
    \xguide[$s$]{\s}{\s-\z}
    \yguide[$s - z$]{{\z-(\s-\z)}}{\s-\z}
    \xguide[\relax]{{\s+\diffexpr}}{\z}
    
    \draw[guide] (0, \z) ++({\s-2*(\s-\z)}, {-(\s-2*(\s-\z))}) -- ++({\s-\z}, {-(\s-\z)}) -- ++ ({\s-\z}, {\s-\z});
    
    \draw[decorate, decoration={brace, raise=0.25em}]
      (\diff, \z) -- ++(\s, 0)
      node[midway, above=0.4em]
      {\vphantom{by}$s$};
    
    \axes{14}{10}{$0$}{age}{$0$}{rank}
    \draw[psjf-e, densely dotted] (0, \z) -- ++(\diff, 0);
    \draw[srpt-se, densely dashed] (\diff, \z) -- ++(\s, -\z);
    \draw[srpt-b] (0, \z) -- ++({\s-2*(\s-\z)}, {-(\s-2*(\s-\z))}) -- ++({2*(\s-\z)}, 0);
    \draw[size, dotted] (\s, 0) -- ++(\diff, 0);
    
    \begin{scope}[shift={(0, -3.75)}]
        \newcommand{\spacing}{2.1}
        \draw[psjf-e, densely dotted] (0, 0) -- ++(2.2, 0);
        \node[right] at (1.9, 0) {\vphantom{by}${}\to \gp[\big]{\frac{s}{z} - 1} \E{\Tres_\psjfe(s, z)}$};
        \draw[srpt-se, densely dashed] (0, -\spacing) -- ++(2.2, 0);
        \node[right] at (1.9, -\spacing) {\vphantom{by}${}\to \E{\Tres_\srptse(s, z)}$};
        \draw[srpt-b] (0, -2*\spacing) -- ++(2.2, 0);
        \node[right] at (1.9, -2*\spacing) {\vphantom{by}${}\to \E{\Tres_\srptb(s, z)}$};
        \draw[size, dotted] (0, -3*\spacing) -- ++(2.2, 0);
        \node[right] at (1.9, -3*\spacing) {\vphantom{by}${}\to \gp[\big]{\frac{s}{z} - 1} s$};
        
        \draw[decorate, decoration={brace, mirror, raise=0.25em}]
            (-0.4, 0 + 0.4) -- ++(0, -\spacing - 2*0.4)
            node[midway, left=0.4em] {\vphantom{by}$h_{s, z}(\cdot)$};
        \draw[decorate, decoration={brace, mirror, raise=0.25em}]
            (-0.4, -2*\spacing + 0.4) -- ++(0, -\spacing - 2*0.4)
            node[midway, left=0.4em] {\vphantom{by}$g_{s, z}(\cdot)$};
    \end{scope}
\end{tikzpicture}
        \caption{Estimated Size~$z < {}$True Size~$s \leq 2z$}
    \end{subfigure}
    \caption{Relating Residence Times of SRPT-B, SRPT-SE, and PSJF-E in Proof of \cref{thm:g_h_exist}}
    \label{fig:residence}
\end{figure}
\postmidfigure

\begin{proof}
    By \cref{thm:worst_future_rank}, we can view
    $g_{s, z}(\cdot)$ as
    SRPT-B's worst future rank followed by zero.
    Similarly, $h_{s, z}(\cdot)$ is
    a segment of PSJF-E's worst future rank
    followed by SRPT-SE's worst future rank (shifted to the right).
    Applying \cref{thm:soap}(b)
    thus yields \cref{eq:g_integral, eq:h_integral}.

    It remains only to prove \cref{eq:g_leq_h},
    namely that $g_{s, z}(\cdot)$ is always below $h_{s, z}(\cdot)$.
    We show this geometrically in \cref{fig:residence}.
    The illustration shows the $z \geq s$ and $z < s \leq 2z$ cases,
    and the $s > 2z$ case is essentially the same as the $s = 2z$ case.
\end{proof}

\begin{proposition}
    \label{thm:srpt-b_residence}
    The mean residence time of SRPT-B is bounded above by
    \begin{equation*}
        \E{\Tres_\srptb} \leq \E{\Tres_\srptse} + \min\curlgp[\bigg]{1, \max\curlgp[\Big]{1 - \frac{1}{\alpha}, \frac{1}{\beta} - 1}} \gp*{\frac{1}{\rho} \ln\frac{1}{1 - \rho} - 1} \E{S}.
    \end{equation*}
\end{proposition}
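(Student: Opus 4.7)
The plan is to assemble the proposition from three ingredients already in hand: the pointwise residence-time bound set up by \cref{thm:g_h_exist}, the closed form for $\E{\Tres_\psjfe}$ from \cref{thm:psjf-e_residence}, and the $(\beta,\alpha)$-boundedness of the ratio $s/z$. None of the steps should require new machinery; the work is almost entirely bookkeeping.

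First I would derive the pointwise bound \eqref{eq:srpt-b_residence_goal} at the level of a single $(s,z)$ pair. Since $1/(1-\rho_Z(\cdot))$ is nondecreasing, \cref{eq:g_integral}, \cref{eq:h_integral}, and \cref{eq:g_leq_h} together give
\begin{equation*}
    \E{\Tres_\srptb(s, z)}
    \leq \E{\Tres_\srptse(s, z)} + c_{s,z}\gp*{\E{\Tres_\psjfe(s, z)} - s}.
\end{equation*}
I would then integrate this against $f_{S,Z}(s,z)$. Because $\E{\Tres_\psjfe(s,z)} - s = s\gp*{\tfrac{1}{1-\rho_Z(z)} - 1} \geq 0$ pointwise (as noted in \cref{rmk:residence_busy_period}), replacing $c_{s,z}$ by any uniform upper bound $c$ preserves the inequality after integration, yielding \cref{eq:srpt-b_residence_goal}.

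Next, I would bound $c_{s,z} = \min\{|1 - s/z|,\,1\}$ uniformly using $(\beta,\alpha)$-boundedness. Since $z \in [\beta s, \alpha s]$, the ratio $s/z$ lies in $[1/\alpha, 1/\beta]$. Splitting into the cases $s \leq z$ and $s > z$, one has $1 - s/z \leq 1 - 1/\alpha$ in the first case and $s/z - 1 \leq 1/\beta - 1$ in the second, so
\begin{equation*}
    \vert[\big]{1 - s/z} \leq \max\curlgp[\Big]{1 - \tfrac{1}{\alpha},\,\tfrac{1}{\beta} - 1}.
\end{equation*}
Combined with the trivial bound $c_{s,z} \leq 1$, this gives the desired uniform constant $c = \min\{1,\max\{1-1/\alpha,\,1/\beta-1\}\}$.

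Finally, I would finish by plugging \cref{thm:psjf-e_residence} into \cref{eq:srpt-b_residence_goal}: $\E{\Tres_\psjfe} - \E{S} = \gp*{\tfrac{1}{\rho}\ln\tfrac{1}{1-\rho} - 1}\E{S}$, which combined with the bound on $c$ produces exactly the claimed inequality. There is no real obstacle here — every nontrivial comparison has already been done in \cref{thm:g_h_exist} and \cref{fig:residence}. The only place requiring a moment of care is verifying that $\E{\Tres_\psjfe(s,z)} - s \geq 0$ pointwise, so that replacing $c_{s,z}$ with a uniform $c$ is a legitimate upper bound rather than only an inequality in expectation; this follows immediately from \cref{eq:psjf-e_residence_s_z}.
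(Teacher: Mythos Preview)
Your proposal is correct and follows essentially the same route as the paper: bound $c_{s,z}$ uniformly via $(\beta,\alpha)$-boundedness, invoke \cref{thm:g_h_exist} together with the monotonicity of $1/(1-\rho_Z(\cdot))$ to get \cref{eq:srpt-b_residence_goal}, and then substitute the closed form from \cref{thm:psjf-e_residence}. If anything, you are slightly more careful than the paper in explicitly justifying why $\E{\Tres_\psjfe(s,z)} - s \geq 0$ pointwise is needed to pass from the per-$(s,z)$ bound with $c_{s,z}$ to the integrated bound with the uniform constant~$c$.
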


\begin{proof}
    Let $c_{s, z} = \min\{1, |1 - s/z|\}$, as in \cref{thm:g_h_exist}.
    Because we have $(\beta, \alpha)$-bounded size estimates,
    $c_{s, z} \leq \min\{1, \max\{1 - 1/\alpha, 1/\beta - 1\}\}$
    for all feasible pairs of true size~$s$ and estimated size~$z$.
    This bound on $c_{s, z}$ is independent of $s$ and~$z$,
    so by \cref{thm:g_h_exist} and the discussion at the start of this section,
    \cref{eq:srpt-b_residence_goal} holds
    with $c = \min\{1, \max\{1 - 1/\alpha, 1/\beta - 1\}\}$.
    The result then follows from \cref{thm:psjf-e_residence},
    which gives the value of~$\E{\Tres_\psjfe}$.
\end{proof}

\subsection{Waiting Time Difference between SRPT-B and SRPT-SE}
\label{sec:srpt-b:waiting}

By \cref{thm:soap}(a),
computing the waiting time of SRPT-B and SRPT-SE
boils down to computing how much of a job's service happens below a given rank.
That is, recalling
\begin{equation*}
    u_\generic(z) = \E[\bigg]{\gp[\bigg]{\begin{tabular}{@{}l@{}}\text{amount of service time during which} \\ \text{a job has rank $z$ or less under policy~$\generic$}\end{tabular}}^2},
\end{equation*}
our goal is to compare $u_\srptb(z)$ and~$u_\srptse(z)$.
We begin by computing both quantities.

\begin{lemma}
    \label{thm:u_exact}
    \begin{align*}
        u_\srptb(z) &= \E[\big]{S^2 \1(Z \leq z) + \gp[\big]{\max\{0, \min\{S - (Z - z), 2z\}\}}^2 \1(Z > z)}, \\
        u_\srptse(z) &= \E[\bigg]{S^2 \1(Z \leq z) + \gp[\bigg]{\frac{S}{Z}z}^2 \1(Z > z)}.
    \end{align*}
\end{lemma}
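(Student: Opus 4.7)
The plan is direct. For any job with true size $S$ and estimated size $Z$, let $L_\generic(S, Z, z)$ denote the Lebesgue measure of the set of ages $a \in [0, S)$ at which the job's rank under $\generic$ is at most $z$. By the definition in~\cref{eq:u_def} we have $u_\generic(z) = \E{L_\generic(S, Z, z)^2}$, so it suffices to compute $L_\srptse$ and $L_\srptb$ as explicit functions of $(S, Z)$ and then square and take expectation.

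For SRPT-SE, the rank $(Z/S)(S - a)$ decreases linearly in $a$ from $Z$ at $a = 0$ to $0$ at $a = S$. If $Z \leq z$, the rank stays at most $z$ throughout the job's service, so $L_\srptse = S$. If $Z > z$, solving $(Z/S)(S - a) \leq z$ yields $a \geq S(1 - z/Z)$, so $L_\srptse = (S/Z)\,z$. Combining the two cases, squaring, and averaging over $(S, Z)$ yields the SRPT-SE formula.

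For SRPT-B, the rank function $\min\{|Z - a|, Z\}$ is V-shaped: it decreases from $Z$ at $a = 0$ to $0$ at $a = Z$, rises back to $Z$ at $a = 2Z$, and then stays equal to $Z$. Since the rank never exceeds $Z$, the case $Z \leq z$ again gives $L_\srptb = S$. If instead $Z > z$, then $Z - z > 0$ and the set of ages with rank at most $z$ is precisely the single interval $[Z - z, Z + z]$ of length $2z$, so $L_\srptb$ equals the length of $[Z - z, Z + z] \cap [0, S]$. This length is $0$ when $S \leq Z - z$, equals $S - (Z - z)$ when $Z - z < S \leq Z + z$, and equals $2z$ when $S > Z + z$; the three subcases collapse compactly into $\max\{0, \min\{S - (Z - z), 2z\}\}$. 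Squaring and averaging over $(S, Z)$ yields the SRPT-B formula.

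No substantial obstacle is anticipated: both rank functions are explicit piecewise-linear expressions, and computing $L_\generic$ is a purely geometric exercise. The only bit of care needed is verifying that the set of ages with rank at most $z$ is always a single interval (so that its Lebesgue measure is just an interval length) and then collapsing the three subcases in the $Z > z$ branch of SRPT-B into the single $\max\{0, \min\{\cdot, 2z\}\}$ expression.
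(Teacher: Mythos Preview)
Your proposal is correct and follows essentially the same approach as the paper: split on $Z\le z$ versus $Z>z$, observe that in the latter case the sub-$z$ ages form the interval $[Z-z,Z+z]\cap[0,S)$ for SRPT-B and the final $z/Z$ fraction for SRPT-SE, and square. Your write-up is simply more explicit about the three subcases and their collapse into the $\max\{0,\min\{\cdot,2z\}\}$ expression than the paper's one-line version.
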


\begin{proof}
    Consider a job with true size~$S$ and estimated size~$Z$
    drawn from the joint distribution of true and estimated sizes.
    Under both policies, if $Z \leq z$,
    then the job's rank remains $z$ or less for its entire service time,
    which explains the $\1(Z \leq z)$ terms.
    If instead $Z > z$, then the following reasoning explains the $\1(Z > z)$ terms.
    \begin{itemize}
        \item Under SRPT-B,
        the job has rank $z$ or less when its age is in interval $[0, S) \cap [Z - z, Z + z]$.
        \item Under SRPT-SE,
        the job spends a $z/Z$ fraction of its service time with rank~$z$ or less.
        \qedhere
    \end{itemize}
\end{proof}

\begin{lemma}
    \label{thm:u_bound}
    \begin{equation*}
        u_\srptb(z) - u_\srptse(z) \leq 3z \max\{1 - \beta, 0\} \E{S \1(Z > z)}.
    \end{equation*}
\end{lemma}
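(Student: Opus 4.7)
The plan is to use \cref{thm:u_exact} to cancel the $S^2 \1(Z \leq z)$ terms common to $u_\srptb(z)$ and $u_\srptse(z)$, and then establish the \emph{pointwise} inequality
\begin{equation*}
(A^2 - B^2)\, \1(Z > z) \leq 3z \max\{1 - \beta,\, 0\}\, S\, \1(Z > z),
\end{equation*}
where $A = \max\{0,\, \min\{S - (Z - z),\, 2z\}\}$ is the amount of SRPT-B service at rank~$\leq z$ and $B = (S/Z)\, z$ is the SRPT-SE counterpart. Taking expectations then immediately yields the lemma. Since the right-hand side vanishes unless $\beta < 1$, it suffices to verify the inequality under the assumption $\beta < 1$ and $Z > z$, via a case split on where $S$ lies relative to $Z$ and $Z + z$.

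\textbf{Case 1: $S \leq Z$.} If $S \leq Z - z$ then $A = 0 \leq B$; otherwise $A = S - Z + z$, and a direct computation gives $A - B = (S - Z)(Z - z)/Z \leq 0$. Either way $A^2 - B^2 \leq 0$. \textbf{Case 2: $Z < S \leq Z + z$.} Let $u = S - Z \in (0, z]$, so $A = u + z$ and $B = (u + Z) z/Z$. Factoring gives $A - B = u(Z - z)/Z$ and $A + B = u(Z + z)/Z + 2z$, and multiplying together with the estimates $(Z^2 - z^2)/Z^2 \leq 1$ and $(Z - z)/Z \leq 1$ yields $A^2 - B^2 \leq u(u + 2z) \leq 3uz$ since $u \leq z$. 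The $(\beta, \alpha)$-bound $Z \geq \beta S$ then gives $u = S - Z \leq (1 - \beta) S$. \textbf{Case 3: $S > Z + z$.} Here $A = 2z$ and $B = Sz/Z > z$ (since $S > Z$), so $A^2 - B^2 \leq 4z^2 - z^2 = 3z^2$. But $S > Z + z \geq \beta S + z$ forces $(1 - \beta) S > z$, so $3z^2 < 3z(1 - \beta) S$.

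The main delicacy is coaxing the constant down to~$3$ rather than~$4$. The cruder bounds $A \leq 2z$ and $B \leq z/\beta$ give only $A + B \leq 2z + z/\beta$ and hence a bound of roughly $4z(1 - \beta) S$; sharpening requires both exploiting the exact identity $A - B = (S - Z)(Z - z)/Z$ to pick up the extra factor $(Z - z)/Z$ in Case~2, and the observation $B > z$ (from $S > Z$) in Case~3. Everything else is routine.
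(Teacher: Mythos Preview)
Your proof is correct and follows essentially the same route as the paper: use \cref{thm:u_exact} to cancel the common term, show the integrand is nonpositive when $S \leq Z$, and then bound the $S > Z$ contribution by $3z(S-Z) \leq 3z(1-\beta)S$. The only difference is cosmetic: the paper handles all of $S > Z$ in one stroke by writing $A = z + \min\{S-Z, z\}$ and using $B \geq z$ (since $S > Z$), which immediately gives $A^2 - B^2 \leq (z+m)^2 - z^2 = m(2z+m) \leq 3zm \leq 3z(S-Z)$ with $m = \min\{S-Z,z\}$. Your further split into Cases~2 and~3 and the $(A-B)(A+B)$ factorization in Case~2 are fine but unnecessary; in particular, contrary to your closing remark, the extra factor $(Z-z)/Z$ is not needed to get the constant~$3$, since the cruder bound $B \geq z$ that you already invoke in Case~3 also suffices in Case~2.
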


\begin{proof}
    We begin by applying \cref{thm:u_exact}:
    \begin{equation*}
        u_\srptb(z) - u_\srptse(z)
        = \E[\bigg]{\gp[\bigg]{\gp[\big]{\max\{0, \min\{S - (Z - z), 2z\}\}}^2 - \gp[\bigg]{\frac{S}{Z}z}^2} \1(Z > z)}.
    \end{equation*}
    If $S \leq Z$, then because $z/Z < 1$ whenever the indicator is nonzero,
    the following computation shows that
    the expression inside the expectation is nonpositive:
    \begin{equation*}
        \begin{aligned}
            \frac{z}{Z} &< 1 &
            &{\Rightarrow} & \frac{z}{Z}(Z - S) &\leq Z - S &
            &{\Rightarrow} & S - (Z - z) &\leq \frac{S}{Z}z.
        \end{aligned}
    \end{equation*}
    This means adding an $S > Z$ to the indicator gives an upper bound,
    from which we compute
    \begin{align*}
        \MoveEqLeft
        u_\srptb(z) - u_\srptse(z) \\
        &\leq \E[\Bigg]{\gp[\Bigg]{\gp[\big]{\max\{0, \min\{S - (Z - z), 2z\}\}}^2 - \gp[\bigg]{\frac{S}{Z}z}^2} \1(S > Z > z)} \\
        & \by[-96mu]{nonpositive when $S \leq Z$} \\
        &\leq \E[\big]{\gp[\big]{(z + \min\{S - Z, z\})^2 - z^2}\1(S > Z > z)}  \\
        &\leq \E{3z(S - Z) \1(S > Z > z)} \\
        &\leq 3z \max\{1 - \beta, 0\} \E{S \1(Z > z)}.
        \by[-96mu]{using $Z \geq \beta S$}
        \qedhere
    \end{align*}
\end{proof}

\begin{proposition}
    \label{thm:srpt-b_waiting}
    The mean waiting time of SRPT-B is bounded above by
    \begin{equation*}
        \E{\Twait_\srptb}
        \leq \E{\Twait_\srptse} + \frac{3}{2} \alpha \max\{1 - \beta, 0\} \gp*{\frac{1}{\rho} \ln\frac{1}{1 - \rho} - 1} \E{S}.
    \end{equation*}
\end{proposition}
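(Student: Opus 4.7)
The plan is to integrate the pointwise waiting time bound over~$z$ and then reduce the resulting integral to a one-variable calculus problem via the substitution $u = \rho_Z(z)$. First, by the tower property I would write
\[
    \E{\Twait_\srptb} - \E{\Twait_\srptse}
    = \int_0^\infty \gp[\big]{\E{\Twait_\srptb(z)} - \E{\Twait_\srptse(z)}} f_Z(z) \d{z},
\]
and apply \cref{thm:soap:waiting} together with \cref{thm:u_bound} to bound the integrand by
\[
    \frac{\lambda}{2} \cdot \frac{u_\srptb(z) - u_\srptse(z)}{(1 - \rho_Z(z))^2}
    \;\leq\; \frac{3\lambda \max\{1-\beta,0\}}{2} \cdot \frac{z \, \E{S\1(Z>z)}}{(1 - \rho_Z(z))^2}.
\]

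Next, I would rewrite the two $Z$-dependent quantities appearing in the numerator in a way that matches $\d\rho_Z$. Using \cref{thm:rho_derivative} and $\rho = \lambda\E{S}$, the tail expectation becomes
\[
    \E{S\1(Z>z)}
    = \int_z^\infty \E{S \given Z = y} f_Z(y) \d{y}
    = \frac{\rho - \rho_Z(z)}{\lambda}.
\]
For the remaining $z f_Z(z)$ factor, I would invoke the $(\beta,\alpha)$-boundedness: conditional on $Z = z$, we have $S \geq z/\alpha$, so $z \leq \alpha \E{S \given Z = z}$, whence
\[
    z f_Z(z) \d{z}
    \;\leq\; \alpha \E{S \given Z = z} f_Z(z) \d{z}
    \;=\; \frac{\alpha}{\lambda} \d\rho_Z(z).
\]

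Substituting both identities into the integral and changing variables via $u = \rho_Z(z)$ (so $u$ ranges over $[0, \rho]$) collapses everything to
\[
    \E{\Twait_\srptb} - \E{\Twait_\srptse}
    \;\leq\; \frac{3\alpha \max\{1-\beta, 0\}}{2\lambda} \int_0^\rho \frac{\rho - u}{(1 - u)^2} \d{u}.
\]
The remaining integral is elementary: writing $\rho - u = (1 - u) - (1 - \rho)$ and splitting gives
\[
    \int_0^\rho \frac{\rho - u}{(1 - u)^2} \d{u}
    \;=\; \ln\frac{1}{1 - \rho} - \rho,
\]
and dividing through by $\rho = \lambda \E{S}$ yields the claimed factor $(\rho^{-1}\ln(1-\rho)^{-1} - 1)\E{S}$.

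The main obstacle is step-by-step ensuring that the rough pointwise bound from \cref{thm:u_bound} is tight enough to survive the integration: naively each factor of $z$ or of $\E{S\1(Z>z)}$ could produce divergent contributions near $z = \infty$ for heavy-tailed $S$. The two rewrites above are specifically designed so that both factors turn into $\d\rho_Z$ or $\rho - \rho_Z(z)$ terms, which are automatically bounded because $\rho_Z$ is monotone and bounded by $\rho < 1$; the $(\beta,\alpha)$-boundedness enters crucially (and only) through the step $z \leq \alpha \E{S \given Z = z}$, which is also where the factor $\alpha$ in the final bound comes from.
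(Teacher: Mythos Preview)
Your proposal is correct and follows essentially the same route as the paper: both start from \cref{thm:soap:waiting}, apply \cref{thm:u_bound}, and then use $z \leq \alpha\,\E{S \mid Z=z}$ together with \cref{thm:rho_derivative} to reduce to an elementary integral. The only cosmetic difference is in the final evaluation: you substitute $u = \rho_Z(z)$ and recognize $\E{S\1(Z>z)} = (\rho - \rho_Z(z))/\lambda$ directly, whereas the paper leaves $\E{S\1(Z>z)}$ as an integral and integrates by parts; the two computations are equivalent and yield the same $\ln\frac{1}{1-\rho} - \rho$.
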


\begin{proof}
    The high-level steps of the proof are the following:
    \begin{itemize}
        \item We use \cref{thm:soap}(a)
        to express the difference $\E{\Twait_\srptb} - \E{\Twait_\srptse}$
        in terms of $u_\srptb(z) - u_\srptse(z)$.
        \item We bound $u_\srptb(z) - u_\srptse(z)$ using \cref{thm:u_bound}.
        \item We use integration by parts,
        obtaining an expression that is similar to one that appears in
        the proof of \cref{thm:psjf-e_residence}.
        \item Mirroring the remainder of the proof of \cref{thm:psjf-e_residence},
        which involves applying \cref{thm:rho_derivative},
        yields the desired result.
    \end{itemize}
    We begin by computing
    \begin{align*}
        \MoveEqLeft
        \E{\Twait_\srptb} - \E{\Twait_\srptse} \\
        &= \int_0^\infty (\E{\Twait_\srptb(z)} - \E{\Twait_\srptse(z)}) f_Z(z) \d{z}
        \by{conditioning on $Z$} \\
        &= \frac{\lambda}{2} \int_0^\infty \frac{u_\srptb(z) - u_\srptse(z)}{(1 - \rho_Z(z))^2} f_Z(z) \d{z}
        \by{\cref{thm:soap}(a)} \\
        &\leq \frac{3}{2} \max\{1 - \beta, 0\} \int_0^\infty \frac{\lambda z \E{S \1(Z > z)}}{(1 - \rho_Z(z))^2} f_Z(z) \d{z}
        \by{\cref{thm:u_bound}} \\
        &\leq \frac{3}{2} \alpha \max\{1 - \beta, 0\} \int_0^\infty \frac{\lambda \E{S \given Z = z} \E{S \1(Z > z)}}{(1 - \rho_Z(z))^2} f_Z(z) \d{z}.
        \by{using $Z \leq \alpha S$}
    \end{align*}
    It remains only to bound the last integral.
    By \cref{thm:rho_derivative}, we have
    \begin{equation}
        \label{eq:waiting_integration_by_parts_1}
        \frac{\lambda \E{S \given Z = z} f_Z(z)}{(1 - \rho_Z(z))^2}
        = \dd{z}\frac{1}{1 - \rho},
    \end{equation}
    and conditioning on~$Z$ yields
    \begin{equation}
        \label{eq:waiting_integration_by_parts_2}
        \E{S \1(Z > z)} = \int_z^\infty \E{S \given Z = z'} \d{z'}.
    \end{equation}
    Combining these equations and integrating by parts, we obtain
    \begin{align*}
        \MoveEqLeft
        \int_0^\infty \frac{\lambda \E{S \given Z = z} \E{S \1(Z > z)}}{(1 - \rho_Z(z))^2} f_Z(z) \d{z} \\
        &= \int_0^\infty \gp[\bigg]{\dd{z}\frac{1}{1 - \rho_Z(z)}} \gp[\bigg]{\int_z^\infty \E{S \given Z = z'} f_Z(z') \d{z'}} \d{z}
        \by{\cref{eq:waiting_integration_by_parts_1, eq:waiting_integration_by_parts_2}} \\
        &= 0 - \E{S} + \int_0^\infty \frac{\E{S \given Z = z} f_Z(z)}{1 - \rho_Z(z)} \d{z}
        \by{integrating by parts} \\
        &= \gp*{\frac{1}{\rho} \ln\frac{1}{1 - \rho} - 1} \E{S}.
        \by{\cref{thm:rho_derivative}}
        \qedhere
    \end{align*}
\end{proof}

\subsection{Combining Waiting Time and Residence Time Bounds}
\label{sec:srpt-b:response}

\begin{proof}[Proof of \cref{thm:srpt-b_result}]
    By \cref{thm:srpt_log}, (a) implies (b) and~(c).
    It suffices to show~(a),
    which follows by combining \cref{thm:srpt-b_residence, thm:srpt-b_waiting},
    applying \cref{thm:srpt-se_result}(a),
    and observing
    \begin{equation*}
        \max\{1 - \beta, 0\}
        \leq \1(\beta < 1) \min\curlgp[\bigg]{1, \max\curlgp[\Big]{1 - \frac{1}{\alpha}, \frac{1}{\beta} - 1}}.
        \qedhere
    \end{equation*}
\end{proof}

\section{PSJF with Estimates (PSJF-E)}
\label{sec:psjf-e}

\restate{thm:psjf-e_result}

We prove \cref{thm:psjf-e_result},
which compares PSJF and PSJF-E,
using an argument similar to the proof of \cref{thm:srpt-se_result},
which compares SRPT and SRPT-SE.
However, because PSJF prioritizes jobs by \emph{original} size,
as opposed to \emph{remaining} size,
combining \cref{thm:response_time_work_integral, thm:soap_match_minimizes_work}
to compare PSJF with PSJF-E is not as straightforward
as comparing SRPT with SRPT-SE.
We begin by working out how \cref{thm:response_time_work_integral}
applies to PSJF and PSJF-E.

\begin{lemma}
    \label{thm:psjf_work_integral}
    The mean response time of PSJF is
    \begin{equation*}
        \E{T_\psjf}
        = \frac{1}{\lambda} \int_0^\infty \frac{\E{W_\psjf(\size \leq r)}}{r^2} \d{r}
            + \frac{1}{2\lambda} \ln\frac{1}{1-\rho}.
    \end{equation*}
\end{lemma}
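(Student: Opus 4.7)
The plan is to begin from \cref{thm:response_time_work_integral} applied to PSJF and convert the $(\remsize \leq r)$-work integrand into the $(\size \leq r)$-work integrand plus a correction, on a per-job basis. A job in state $(s, z, a)$ contributes $(s-a)\1(s-a \leq r)$ to $W_\psjf(\remsize \leq r)$ and $(s-a)\1(s \leq r)$ to $W_\psjf(\size \leq r)$, and integrating each against $1/r^2$ on $(0, \infty)$ gives $1$ and $1 - a/s$ respectively, so per job the two integrals differ by exactly $a/s$. Interchanging the $r$-integral with the sum over jobs in the system and with expectation (legitimate because everything is nonnegative) reduces the lemma to proving the snapshot identity
\begin{equation*}
    \E[\bigg]{\sum_{j \in \text{system}} \frac{a_j}{s_j}} = \frac{1}{2} \ln\frac{1}{1 - \rho}.
\end{equation*}

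The plan for evaluating this snapshot expectation is to apply Little's law to the per-customer ``fraction complete'' cost $a/s$, yielding $\E{\sum_j a_j/s_j} = \lambda \E{\int_0^{T_\psjf} a(t)/s \d{t}}$, where $a(t)$ is a tagged job's age at time~$t$ past arrival. Conditioning on the tagged size $S = s$, the age is zero throughout the waiting period, so only the residence contributes. Writing $T_a$ for the first time past arrival at which the tagged job's age equals~$a$, the SOAP residence-time formula (\cref{thm:soap:residence}, applied to PSJF with $\worst{\psjf} \equiv s$ and with $a$ playing the role of required service) gives $\E{T_a - T_0 \given S = s} = a/(1 - \rho_S(s))$ uniformly in $a \in [0, s]$. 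Fubini then yields $\int_0^{T_\psjf} a(t) \d{t} = \int_0^s (T_s - T_a) \d{a}$, so
\begin{equation*}
    \E[\bigg]{\int_0^{T_\psjf} \frac{a(t)}{s} \d{t} \given S = s}
    = \frac{1}{s} \int_0^s \frac{s - a}{1 - \rho_S(s)} \d{a}
    = \frac{s}{2 (1 - \rho_S(s))}.
\end{equation*}
Marginalizing over~$s$ and applying \cref{thm:rho_derivative} in the same manner as in the proof of \cref{thm:psjf-e_residence} evaluates $\E{\int_0^{T_\psjf} a(t)/s \d{t}}$ as $\frac{1}{2\lambda} \ln\frac{1}{1 - \rho}$; multiplying by~$\lambda$ delivers the snapshot identity and hence the lemma.

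The step I expect to require the most care is justifying $\E{T_a - T_0 \given S = s} = a/(1 - \rho_S(s))$ \emph{uniformly in~$a$}, rather than merely at the endpoint $a = s$. This linearity is what produces the $s^2/2$ numerator in the Fubini step, and hence the factor of~$1/2$ in the final answer. It follows from an infinitesimal variant of the derivation of \cref{thm:soap:residence}: at every age $a \in [0, s]$, the tagged job's current rank equals its worst future rank~$s$, so the expected time to advance from age~$a$ to age~$a + \d{a}$ is $\d{a}/(1 - \rho_S(s))$ independently of~$a$. Once this uniform-in-$a$ linearity is verified, the remaining computations are routine and parallel those in the proof of \cref{thm:psjf-e_residence}.
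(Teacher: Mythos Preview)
Your proposal is correct and uses the same key ingredients as the paper's proof: the work-integral formula, the generalized Little's law of Brumelle--Heyman, the PSJF residence-time rate $1/(1-\rho_S(s))$ (which appears in the paper as \cref{eq:residence_integral_to_age_integral}), and the final logarithm via \cref{thm:rho_derivative}. The organization differs. The paper splits the predicate as $(\remsize \leq r) = (\size \leq r) \cup (\remsize \leq r < \size)$, computes $\E{W_\psjf(\remsize \leq r < \size)}$ for each fixed~$r$ via Little's law, and then integrates over~$r$ last. You instead integrate over~$r$ \emph{first} on a per-job basis, collapsing the correction to the ``fraction complete'' $a/s$ and thereby reducing the lemma to the snapshot identity $\E{\sum_j a_j/s_j} = \tfrac{1}{2}\ln\tfrac{1}{1-\rho}$. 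Your ordering gives a cleaner intermediate statement; the paper's ordering has the advantage that it directly parallels the structure of \cref{thm:psjf-e_work_integral}, where the analogous correction term involving $(\remsizee \leq \alpha r < \sizee)$-work does not collapse as neatly and is more naturally handled $r$-by-$r$.
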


\begin{proof}
    Applying \cref{thm:response_time_work_integral} yields
    \begin{align*}
        \E{T_\psjf}
        &= \frac{1}{\lambda} \int_0^\infty \frac{\E{W_\psjf(\remsize \leq r)}}{r^2} \d{r} \\
        \yestag
        \label{eq:T_psjf_first_step}
        &= \frac{1}{\lambda} \int_0^\infty \frac{\E{W_\psjf(\size \leq r)}}{r^2} \d{r}
            + \frac{1}{\lambda} \int_0^\infty \frac{\E{W_\psjf(\remsize \leq r < \size)}}{r^2} \d{r}.
    \end{align*}
    It remains only to compute $\E{W_\psjf(\remsize \leq r < \size)}$.

    Let $\phi_r(s, z, a) = (s - a \leq r < s)$.
    That is, $\phi_r$ is true for states with remaining size less than $r$
    but original size greater than~$r$.
    Our goal is to compute $\E{W(\phi)}$.
    Recall from \cref{def:phi-work} that $W(\phi_r)$ is
    the sum of each individual job's $\phi_r$-work.
    We can therefore use a generalization of Little's law
    \cite{brumelle_relation_1971, heyman_relation_1980}
    to express the average total amount of $\phi_r$-work, namely $\E{W(\phi_r)}$,
    as the arrival rate~$\lambda$ times
    the average cumulative amount of $\phi_r$-work a job contributes
    over the course of its time in the system.
    Specifically, consider a random job with
    true size~$S$, estimated size~$Z$, response time~$T$,
    and age~$A(t)$ after being in the system for time $t \in [0, T)$.\footnote{%
        The random variables $S$, $Z$, $T$, and $A(t)$ refer to the same job,
        so they are not independent.
        In addition, the response time~$T$ and age~$A(t)$ depend on PSJF scheduling.
        The same applies to $\Twait$ and~$\Tres$, which we introduce shortly.}
    Then by the generalization of Little's law, we can write the mean $\phi_r$-work as
    \begin{align}
        \label{eq:little_w_phi_r}
        \E{W_\psjf(\phi_r)} = \lambda \E[\bigg]{\int_0^T w(\phi_r, (S, Z, A(t))) \d{t}},
    \end{align}

    We now compute the expectation of the right-hand side of \cref{eq:little_w_phi_r},
    which concerns a single job's time in the system.
    Because $\phi_r(s, z, a)$ holds only if $a > 0$
    and a job's age is~$0$ during its waiting time,
    we can restrict attention to residence time.
    Letting $\Twait$ and $\Tres$ denote the random job's
    waiting and residence times, respectively,
    we have
    \begin{align*}
        \MoveEqLeft
        \E[\bigg]{\int_0^T w(\phi_r, (S, Z, A(t))) \d{t}} \\
        &= \E[\bigg]{\int_0^{\Twait} w(\phi_r, (S, Z, 0)) \d{t} + \int_{\Twait}^{\Twait + \Tres} w(\phi_r, (S, Z, A(t))) \d{t}} \\
        \yestag
        \label{eq:response_integral_to_residence_integral}
        &= \E[\bigg]{\int_{\Twait}^{\Twait + \Tres} w(\phi_r, (S, Z, A(t))) \d{t}}.
    \end{align*}
    Under PSJF, a job's rank is always its size,
    and the load of jobs with rank better than~$s$ is~$\rho_S(s)$.
    This means that during the residence time of a job of size~$s$,
    it takes $\Delta/(1 - \rho_S(s))$ time for a job's age to increase by~$\Delta$.\footnote{%
        This $1/(1 - \rho_S(s))$ factor under PSJF
        is similar to the $1/(1 - \rho_Z(z))$ factor that appears under PSJF-E,
        as discussed in \cref{rmk:residence_busy_period}.
        One can use the concept of \emph{busy periods}
        from M/G/1 scheduling theory to formalize this
        \cite{harchol-balter_performance_2013, scully_soap_2018}.}
    This means that for any function~$g$ and any size~$s$,
    \begin{equation}
        \label{eq:residence_integral_to_age_integral}
        \E[\bigg]{\int_{\Twait}^{\Twait + \Tres} g(A(t)) \d{t} \given S = s} = \frac{1}{1 - \rho_S(s)} \int_0^s g(a) \d{a}.
    \end{equation}
    From this, we compute
    \begin{align*}
        \MoveEqLeft
        \E[\bigg]{\int_0^T w(\phi_r, (S, Z, A(t))) \d{t}} \\
        &= \E[\bigg]{\int_{\Twait}^{\Twait + \Tres} w(\phi_r, (S, Z, A(t))) \d{t}}
        \by[-72mu]{\cref{eq:response_integral_to_residence_integral}} \\
        &= \int_0^\infty \E[\bigg]{\int_{\Twait}^{\Twait + \Tres} w(\phi_r, (s, Z, A(t))) \d{t} \given S = s} f_S(s) \d{s}
        \by[-72mu]{conditioning on $S$} \\
        &= \int_0^\infty \E[\bigg]{\int_{\Twait}^{\Twait + \Tres} (s - A(t))\1(s - A(t) \leq r < s) \d{t} \given S = s} f_S(s) \d{s} \\
        & \by[-72mu]{expanding $\phi_r$} \\
        &= \int_0^\infty \nestint_0^s (s - a) \1(s - a \leq r < s) \d{a} \, \frac{f_S(s)}{1 - \rho_S(s)} \d{s}
        \by[-72mu]{\cref{eq:residence_integral_to_age_integral}} \\
        \yestag
        \label{eq:cumulative_job_rrs-work}
        &= \frac{r^2}{2} \int_r^\infty \frac{f_S(s)}{1 - \rho_S(s)} \d{s}.
    \end{align*}
    Finally, we use \cref{eq:little_w_phi_r}
    to plug this back into \cref{eq:T_psjf_first_step},
    obtaining
    \begin{align*}
        \frac{1}{\lambda} \int_0^\infty \frac{\E{W_\psjf(\remsize \leq r < \size)}}{r^2} \d{r}
        &= \frac{1}{2} \int_0^\infty \nestint_r^\infty \frac{f_S(s)}{1 - \rho_S(s)} \d{s} \d{r}
        \by[-18mu]{\cref{eq:little_w_phi_r, eq:cumulative_job_rrs-work}} \\
        &= \frac{1}{2} \int_0^\infty \frac{s f_S(s)}{1 - \rho_S(s)} \d{s}
        \by[-18mu]{swapping integrals} \mkern-48mu \\
        &= \frac{1}{2\lambda} \ln\frac{1}{1 - \rho}.
        \by[-18mu]{\cref{thm:rho_derivative}}
        \qedhere
    \end{align*}
\end{proof}

\begin{lemma}
    \label{thm:psjf-e_work_integral}
    The mean response time of PSJF-E is bounded by
    \begin{equation*}
        \E{T_\psjfe}
        \leq \frac{1}{\lambda} \int_0^\infty \frac{\E{W_\psjfe(\sizee \leq \alpha r)}}{r^2} \d{r}
            + \frac{\alpha}{\beta} \frac{1}{2\lambda} \ln\frac{1}{1-\rho}.
    \end{equation*}
\end{lemma}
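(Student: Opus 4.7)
My plan is to mirror the proof of \cref{thm:psjf_work_integral} as closely as possible, adapting each step to the PSJF-E setting where a job's rank is its estimated size rather than its true size. The structural parallel is: in place of the ``size exceeds~$r$'' condition ($\size > r$) that split the work in the PSJF proof, we use ``estimated size exceeds $\alpha r$'' ($\sizee > \alpha r$); and in place of the PSJF busy-period scaling factor $1/(1 - \rho_S(s))$, we use the PSJF-E factor $1/(1 - \rho_Z(z))$ (see \cref{rmk:residence_busy_period}).

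First, \cref{thm:response_time_work_integral} gives
\begin{equation*}
    \E{T_\psjfe} = \frac{1}{\lambda} \int_0^\infty \frac{\E{W_\psjfe(\remsize \leq r)}}{r^2} \d{r}.
\end{equation*}
I would then partition the event $\{\remsize \leq r\}$ according to whether $\sizee \leq \alpha r$ or $\sizee > \alpha r$, which yields
\begin{equation*}
    \E{W_\psjfe(\remsize \leq r)}
    \leq \E{W_\psjfe(\sizee \leq \alpha r)}
    + \E{W_\psjfe(\remsize \leq r \wedge \sizee > \alpha r)}.
\end{equation*}
This already produces the first term in the statement, so all that remains is to bound the integral of the residual $\E{W_\psjfe(\remsize \leq r \wedge \sizee > \alpha r)}/r^2$.

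Second, I would compute the residual expectation by the generalized Little's law argument used in \cref{thm:psjf_work_integral}. The two adaptations are: (i) the age-to-time scaling factor becomes $1/(1 - \rho_Z(z))$ because PSJF-E ranks by estimated size; and (ii) the indicator $\sizee > \alpha r$ together with $(\beta, \alpha)$-boundedness ($z \leq \alpha s$) forces $s > r$, so the inner age integral $\int_0^s (s - a) \1(s - a \leq r) \d{a}$ still evaluates cleanly to $r^2/2$. After integrating out $s$ (using $\int f_{S, Z}(s, z) \d{s} = f_Z(z)$), the residual becomes
\begin{equation*}
    \E{W_\psjfe(\remsize \leq r \wedge \sizee > \alpha r)}
    = \frac{\lambda r^2}{2} \int_{\alpha r}^\infty \frac{f_Z(z)}{1 - \rho_Z(z)} \d{z}.
\end{equation*}

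Third, dividing by $\lambda r^2$, integrating over $r$, and swapping the order of integration converts the indicator $\1(z > \alpha r)$ into an upper limit $z/\alpha$ for $r$, leaving $\frac{1}{2\alpha} \int_0^\infty \frac{z f_Z(z)}{1 - \rho_Z(z)} \d{z}$. Using the bound $z \leq \alpha \E{S \given Z = z}$ (a direct consequence of $z \leq \alpha s$) and \cref{thm:rho_derivative} to recognize $\lambda \E{S \given Z = z} f_Z(z) \d{z}$ as $\d \rho_Z(z)$, the integral telescopes to $\frac{1}{2\lambda} \ln \frac{1}{1 - \rho}$. Since $\alpha/\beta \geq 1$, this is at most the stated $\frac{\alpha}{\beta} \frac{1}{2\lambda} \ln \frac{1}{1 - \rho}$, so the claimed bound follows.

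The main obstacle I expect is the bookkeeping in the Little's-law step: one has to be careful about when to use the joint density $f_{S, Z}(s, z)$ versus the marginals, and about applying the PSJF-E scaling factor $1/(1 - \rho_Z(z))$ correctly (rather than $1/(1 - \rho_S(s))$ as in the PSJF analog). Once that step is right, everything else is change of order of integration plus a single application of \cref{thm:rho_derivative}.
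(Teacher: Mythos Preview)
Your proof is correct, and in fact it proves something slightly stronger than the paper does. The one substantive difference from the paper's argument is where you introduce the $\sizee \leq \alpha r$ split.

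The paper first passes from $\remsize \leq r$ to the weaker predicate $\remsizee \leq \alpha r$ (using $\remsizee \leq \alpha\,\remsize$) and only then splits off $\sizee \leq \alpha r$, leaving a residual $W_\psjfe(\remsizee \leq \alpha r < \sizee)$. Working with $\remsizee$ means the inner age integral picks up a factor $(S/z)^2$; after swapping integrals this leaves $\E{(S/z)\,S \mid Z = z}$, which the paper bounds via $S/z \leq 1/\beta$ to obtain the $\alpha/\beta$ coefficient on the log term.

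You instead split $\remsize \leq r$ directly, so your residual is $W_\psjfe(\remsize \leq r \wedge \sizee > \alpha r)$. Because $z > \alpha r$ together with $z \leq \alpha s$ forces $s > r$, your inner age integral is exactly $r^2/2$ with no $S/z$ factor. After swapping integrals and using $z \leq \alpha\,\E{S \mid Z = z}$, the residual comes out $\leq \frac{1}{2\lambda}\ln\frac{1}{1-\rho}$, which is sharper than the paper's $\frac{\alpha}{\beta}\cdot\frac{1}{2\lambda}\ln\frac{1}{1-\rho}$. You then weaken by $\alpha/\beta \geq 1$ to match the stated lemma, which is harmless but worth noting: your route actually shows the second term can be taken without the $\alpha/\beta$ factor, and the downstream use of this lemma in the proof of \cref{thm:psjf-e_result} would go through unchanged (the ``ratio of second terms'' would simply be $\leq 1$ rather than $= \alpha/\beta$).

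One small point to make explicit when you write it out: you should note, as the paper does for PSJF, that the predicate $\remsize \leq r \wedge \sizee > \alpha r$ is false at age~$0$ (since $z > \alpha r$ forces $s > r = \remsize(s,z,0)$), so the waiting-time portion of the Little's-law integral vanishes and only the residence-time scaling $1/(1 - \rho_Z(z))$ is needed. You allude to this via ``forces $s > r$'' but it is worth saying why that lets you drop the waiting-time contribution.
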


\begin{proof}
    Applying \cref{thm:response_time_work_integral}
    and using the fact that $\remsizee(x)/\remsize(x) \leq \alpha$ yields
    \begin{align*}
        \E{T_\psjfe}
        &= \frac{1}{\lambda} \int_0^\infty \frac{\E{W_\psjfe(\remsize \leq r)}}{r^2} \d{r} \\
        &\leq \frac{1}{\lambda} \int_0^\infty \frac{\E{W_\psjfe(\remsizee \leq \alpha r)}}{r^2} \d{r} \\
        &= \frac{1}{\lambda} \int_0^\infty \frac{\E{W_\psjfe(\sizee \leq \alpha r)}}{r^2} \d{r} \\
        \yestag
        \label{eq:T_psjf-e_first_step}
            &\quad + \frac{1}{\lambda} \int_0^\infty \frac{\E{W_\psjfe(\remsizee \leq r < \sizee)}}{r^2} \d{r}.
    \end{align*}
    It remains only to bound $\E{W_\psjfe(\remsizee \leq \alpha r < \sizee)}$.

    The first part of this computation
    is similar to part of the proof of \cref{thm:psjf_work_integral}.
    Specifically, under PSJF-E,
    a job's rank is always its estimated size,
    so analogues of
    \cref{eq:response_integral_to_residence_integral, eq:residence_integral_to_age_integral, eq:cumulative_job_rrs-work}
    hold for PSJF-E.
    The main difference is that in \cref{eq:residence_integral_to_age_integral},
    we condition on a job having both size~$s$ and estimated size~$z$,
    and we use $\rho_Z(z)$ instead of~$\rho_S(s)$.
    The end result is
    \begin{align*}
        \MoveEqLeft
        \frac{1}{\lambda}\E{W_\psjfe(\remsizee \leq \alpha r < \sizee)} \\
        &= \int_0^\infty \nestint_0^\infty \nestint_0^s (s - a) \1\gp*{\frac{z}{s}(s - a) \leq \alpha r < z} \d{a} \, \frac{f_{S, Z}(s, z)}{1 - \rho_Z(z)} \d{s} \d{z}.
    \end{align*}
    Simplifying the right-hand side yields
    \begin{align*}
        \MoveEqLeft
        \frac{1}{\lambda}\E{W_\psjfe(\remsizee \leq \alpha r < \sizee)} \\
        &= \int_0^\infty \nestint_0^\infty \nestint_0^s q \1\gp*{\frac{z}{s} q \leq \alpha r < z} \d{q} \, \frac{f_{S, Z}(s, z)}{1 - \rho_Z(z)} \d{s} \d{z}
        \by{setting $q = s - a$} \\
        &= \int_{\alpha r}^\infty \nestint_0^\infty \frac{\frac{1}{2} \gp[\big]{\frac{\alpha r s}{z}}^2}{1 - \rho_Z(z)} \d{q} \, f_{S, Z}(s, z) \d{s} \d{z} \\
        &= \int_{\alpha r}^\infty \frac{\frac{1}{2} \E[\big]{\gp[\big]{\frac{\alpha r S}{z}}^2 \given Z = z} f_Z(z)}{1 - \rho_Z(z)} \d{z} \\
        \yestag
        \label{eq:cumulative_job_yrz-work}
        &= \alpha \frac{r^2}{2} \int_{\alpha r}^\infty \frac{\frac{\alpha}{z} \E[\big]{\gp[\big]{\frac{S}{z}} S \given Z = z} f_Z(z)}{1 - \rho_Z(z)} \d{z}.
    \end{align*}
    Finally, we plug this back into \cref{eq:T_psjf-e_first_step}, obtaining
    \begin{align*}
        \MoveEqLeft
        \frac{1}{\lambda} \int_0^\infty \frac{\E{W_\psjfe(\remsizee \leq \alpha r < \sizee)}}{r^2} \d{r} \\
        &= \alpha \frac{1}{2} \int_0^\infty \nestint_{\alpha r}^\infty \frac{\frac{\alpha}{z} \E[\big]{\gp[\big]{\frac{S}{z}} S \given Z = z} f_Z(z)}{1 - \rho_Z(z)} \d{z} \d{r}
        \by{\cref{eq:cumulative_job_yrz-work}} \\
        &= \alpha \frac{1}{2} \int_0^\infty \frac{\E[\big]{\gp[\big]{\frac{S}{z}} S \given Z = z} f_Z(z)}{1 - \rho_Z(z)} \d{z}
        \by{swapping integrals} \\
        &\leq \frac{\alpha}{\beta} \frac{1}{2} \int_0^\infty \frac{\E{S \given Z = z} f_Z(z)}{1 - \rho_Z(z)} \d{z}
        \by{using $Z/S \geq \beta$} \\
        &= \frac{\alpha}{\beta} \frac{1}{2\lambda} \ln\frac{1}{1 - \rho}.
        \by{\cref{thm:rho_derivative}}
        \qedhere
    \end{align*}
\end{proof}

\begin{proof}[Proof of \cref{thm:psjf-e_result}]
    \Citet[Theorem~5.1]{wierman_nearly_2005} show that
    $\E{T_\psjf} \leq 1.5 \E{T_\srpt}$,
    which means (a) implies~(b),
    so it suffices to prove~(a).
    \Cref{thm:psjf_work_integral}
    expresses $\E{T_\psjf}$ as a sum of two terms,
    and \cref{thm:psjf-e_work_integral} bounds $\E{T_\psjfe}$ by
    a sum of two similar terms.
    The ratio of the second terms is~$\alpha/\beta$.
    To bound the ratio of the first terms by~$\alpha/\beta$,
    we proceed similarly to the proof of \cref{thm:srpt-se_result}:
    \begin{align*}
        \MoveEqLeft
        \frac{1}{\lambda} \int_0^\infty \frac{\E{W_\psjfe(\sizee \leq \alpha r)}}{r^2} \d{r} \\
        &\leq \frac{1}{\lambda} \int_0^\infty \frac{\E{W_\psjf(\sizee \leq \alpha r)}}{r^2} \d{r}
        \by{\cref{thm:soap_match_minimizes_work}} \\
        &\leq \frac{1}{\lambda} \int_0^\infty \frac{\E[\big]{W_\psjf\gp[\big]{\size \leq \frac{\alpha}{\beta} r}}}{r^2} \d{r}
        \by{using $\sizee(x)/\size(x) \geq \beta$} \\
        &= \frac{\alpha}{\beta} \frac{1}{\lambda} \int_0^\infty \frac{\E{W_\psjf(\size \leq r')}}{(r')^2} \d{r'}.
        \by{setting $r' = \alpha/\beta \cdot r$}
        \qedhere
    \end{align*}
\end{proof}

\section{Conclusion}

We have examined scheduling policies in the context of estimated sizes.  Our main result is to resolve an issue previously seen empirically, namely that the performance of SRPT-E can degrade significantly due to long jobs being underestimated, by developing and analyzing a novel policy, SRPT-B, which combines the best aspects of SRPT-E and PSJF-E.  In analyzing SRPT-B, we have demonstrated that it has three key properties for $(\beta,\alpha)$-bounded estimates:
(a) an approximation ratio near 1 when $\alpha$ and $\beta$ are near~1, (b) an approximation ratio bounded by some function of $\alpha$ and~$\beta$, and (c) implementation without knowledge of $\alpha$ and~$\beta$.  We have also shown that PSJF-E also has properties (b) and (c), and has an approximation ratio near 1 relative to PSJF when $\alpha$ and $\beta$ are near~1. We have also shown that the empirical observation of the poor performance of SRPT-E can be characterized through a lower bound.

For practical settings, our results provide theoretical backing for previous empirical findings that PSJF-E performs very well with estimated job sizes.  While SRPT-B provides an additional promising alternative that will sometimes perform better in practice, we recommend PSJF-E as a simple, natural scheduling algorithm that seems to generally perform the best or near the best among standard alternatives.

Our work leaves several open directions, including considering other estimation models, optimizing the behavior of the bounce in SRPT-B, and improving the bounds.
For instance,
another important estimation model is a model where estimates are typically good,
but not guaranteed to be good, such as Gaussian errors.
One might try to adapt our results to that setting
by bounding the worst expected error
over a given interval of time.
We also note it may be possible to tighten the bound on the ratio between PSJF and SRPT, which may correspondingly tighten the bound between PSJF-E and SRPT.

\bibliographystyle{plainurl}
\bibliography{refs}

\appendix

\section{Handling Rank Function Ties}
\label{sec:ties}

One might ask how to handle two jobs that have the same rank under rank-based policies.  For example, it is possible that two jobs of the same size arrive and await service;  when service becomes available, should one be given priority, or should a processor-sharing based policy be used?  We avoid such questions by assuming continuous distributions for size and estimated size, so almost surely
no two jobs arrive with the same values for these quantities.

If one considers a discrete job size distribution,
then PSJF-E in the presence of arbitrarily small errors
is equivalent to PSJF with random priority preemptive tiebreaking,
which can have significantly worse mean response time
than a standard tiebreaking rule such as First-Come-First-Served.
This issue is sidestepped by our focus on continuous distributions.

Alternatively, ties could occur if a job being served while its rank was increasing reached the rank of another job which also would have increasing rank if it was served.
Whichever job is served immediately loses minimum rank status, resulting in a
processor-sharing effect. The tiebreaking policy is therefore irrelevant.
The only scheme analyzed in this paper with increasing rank, SRPT-B, avoids this scenario by capping the rank at the initial size estimate.
Under SRPT-B, a job being served with increasing rank can only be preempted by a newly arriving job, which will finish before the preempted job can continue.

If one removed the assumption of continuous distributions,
we believe that our consistency result for SRPT-B, namely
\cref{thm:srpt-b_result}(b), would still hold,
though the proof would likely be significantly more complicated.
We believe this because the ``arbitrarily small error''
scenario that is damaging to PSJF-E
is irrelevant to SRPT-B
when $\alpha$ and $\beta$ are near 1.
SRPT-B has initially decreasing rank,
so arbitrarily small errors lead to
random \emph{nonpreemptive} tiebreaking,
which does not increase response time.
In contrast, we believe that our results for PSJF-E, namely \cref{thm:psjf-e_result},
would not hold as stated for general (non-continuous) distributions,
due to these complications.
However, we believe that a weaker version of our result could still be proven,
with $1.5$ replaced by a larger constant.

\section{Unachievability of Traditional Robustness}
\label{sec:gittins_bound}
In the context of online algorithms with predictions \cite{DBLP:journals/jacm/LykourisV21,DBLP:books/cu/20/MitzenmacherV20},
one typically tries to achieve constant-factor robustness,
which requires that the approximation ratio is bounded by a constant under arbitrarily poor predictions.

In the context of M/G/1 scheduling with predictions,
constant-factor robustness is provably unachievable.
This follows from the fact that
the mean response time of an algorithm with arbitrarily poor predictions
is bounded below by the optimal blind policy, that has no predictions at all.

In the context of M/G/1 scheduling, the optimal blind policy is known:
the Gittins Index policy \cite{gittins_bandit_1979,scully_gittins_2021}
is optimal in this context,
though it requires knowledge of the job size distribution.
When the job size distribution is unknown,
a policy called RMLF
\cite{kalyanasundaram_minimizing_2003, becchetti_nonclairvoyant_2004}
has a competitive ratio relative to SRPT that grows slowly as a function of load
\cite{bansal_achievable_2018}.

The approximation ratio of the Gittins index policy to the SRPT policy
can be arbitrarily poor in the limit as $\rho \to 1$ \cite{scully_optimal_2020}.
Specifically, whenever the job size distribution $S \in \mathrm{MDA}(\Lambda)$,
the \emph{Gumbel domain of attraction},
the approximation ratio of Gittins to SRPT grows arbitrarily
large as $\rho \to 1$.
Loosely, we can think of $\mathrm{MDA}(\Lambda)$
as including all unbounded job size distributions
whose tail is asymptotically lighter than a power law.
For instance, distributions with exponential, Weibull, and log-normal tails
are in~$\mathrm{MDA}(\Lambda)$.

In all such cases, constant-factor robustness is unachievable.
A better goal might be to compare the performance of a policy
against that of the Gittins policy, or perhaps RMLF.
We leave that question for future work.

\section{Poor Performance of SRPT-B Without the Rank Cap}
\label{sec:SRPT-B-without-cap}

We have discussed that for SRPT-B capping the bounce at initial estimated size of a job is important in our analysis.  We here explain why our results would not hold without such a cap.

Under SRPT-B, if a job begins service,
it can only leave service by being preempted by newly arriving jobs,
not by having its rank rise above the rank of other jobs in the queue.
In contrast, under a policy \emph{SRPT with Unlimited Bounce} (SRPT-UB) with rank function $|z-a|$, without a cap at~$z$,
the situation would be very different.

We consider a case where all jobs have nearly the same size ($S$ is nearly constant) and $\beta = 1/2 - \epsilon$. For specificity, let each job have size $s \in [1, 1+\delta]$,
for $\delta \ll \epsilon$.
Let each job's estimate $z = \beta s$.

Under SRPT-UB, consider a job $j$ of size $1$. It begins service with rank $\beta = 1/2-\epsilon$, descends to rank 0, and rises back to a rank of $1/2-\epsilon$
at age $1-2\epsilon$.
At this age, or in the next $\delta$ service,
job $j$'s rank will rise high enough that it will be preempted by any fresh job (that has yet to receive service),
whether that job arrived before or after job $j$.
As a result, job $j$ will have to wait until there are no more fresh jobs to complete.
Approximately, job $j$ must wait until the system empties to complete.
From standard results in queueing theory,
the mean response time under SRPT-UB is therefore $\Theta\gp[\big]{\frac{1}{(1-\rho)^2}}$.
In contrast, a better policy such as SRPT or SRPT-B
has mean response time that grows as $\Theta\gp[\big]{\frac{1}{1-\rho}}$.
The gap between these two response times grows arbitrarily wide as $\rho \to 1$.

\section{SRPT with Continuously Updating Estimates (SRPT-CUE)}
\label{sec:continuous-estimate}

The SRPT-SE scheme requires knowledge of the job's true size,
and hence while it is useful in our analysis,
it is not a practical policy as we have defined it.
However, there may be settings where a variation of SRPT-SE
is implementable.
We now describe such a setting,
the variant of SRPT-SE that can be implemented in it,
and how to extend our bounds in \cref{thm:srpt-se_result}
to cover the new variant.

Suppose that jobs are given size estimates
not just when they initially arrive
but also continuously during service.
More specifically, suppose that once a job of true size~$s$ reaches each age~$a$,
it is given an estimated remaining size in the interval
$[\beta(s - a), \alpha(s - a)]$.
This could model settings where there is some visible metric of job progress
but the speed at which progress is made is uncertain,
such as sending files of known size to clients with unknown packet loss rates.

In this setting, a natural policy is one we call
\emph{SRPT with Continuously Updating Estimates} (SRPT-CUE),
which at every moment in time serves the job
with the smallest remaining size estimate.
The difference between SRPT-SE and SRPT-CUE is that
while each job's rank decreases linearly under SRPT-SE,
a job's rank may follow a more complicated path under SRPT-CUE,
though it will stay in the interval $[\beta(s - a), \alpha(s - a)]$.
We illustrate the difference in \cref{fig:srpt-cue}.

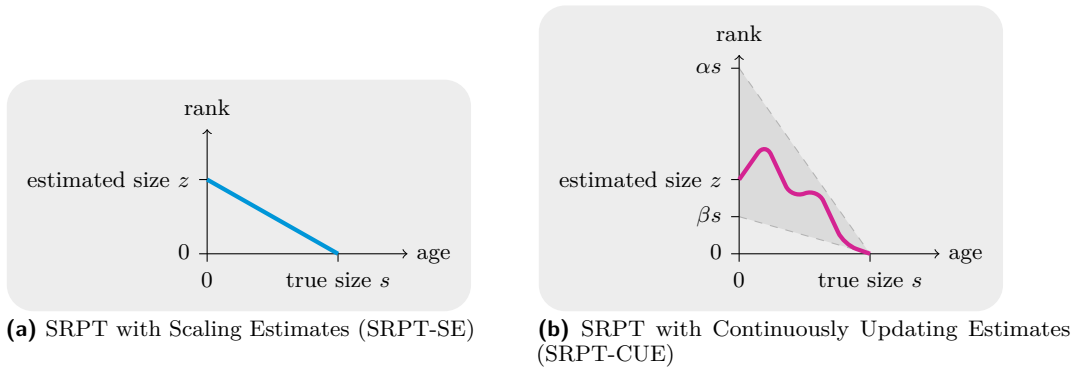
\begin{figure}
    \begin{subfigure}[t]{0.5\linewidth}
        \begin{tikzpicture}[figure]
    \yguide[estimated size~$z$]{0}{4}
    \xguide[true size~$s$]{7}{0}
    \axes{10}{6}{$0$}{age}{$0$}{rank}
    \draw[srpt-se] (0, 4) -- (7, 0);
\end{tikzpicture}
        \caption{SRPT with Scaling Estimates (SRPT-SE)}
    \end{subfigure}%
    \begin{subfigure}[t]{0.5\linewidth}
        \begin{tikzpicture}[figure]
    \fill[black!14] (0, 2) -- (7, 0) -- (0, 10);
    \yguide[estimated size~$z$]{0}{4}
    \yguide[$\alpha s$]{0}{10}
    \yguide[$\beta s$]{0}{2}
    \xguide[true size~$s$]{7}{0}
    \draw[guide] (0, 2) -- (7, 0);
    \draw[guide] (0, 10) -- (7, 0);
    \axes{10}{10}{$0$}{age}{$0$}{rank}
    \draw[size, rounded corners=0.5em] (0, 4) -- (1.4, 6) -- (2.8, 3) -- (4.2, 3.5) -- (5.6, 0.5) -- (7, 0);
\end{tikzpicture}
        \caption{SRPT with Continuously Updating Estimates (SRPT-CUE)}
    \end{subfigure}
    \caption{Comparison Between SRPT-SE and SRPT-CUE}
    \label{fig:srpt-cue}
\end{figure}

The proof of \cref{thm:srpt-se_result} can be modified to give the same result for SRPT-CUE as for SRPT-SE, namely
\begin{equation}
    \label{eq:srpt-cue}
    \E{T_\srptcue} \leq \frac{\alpha}{\beta} \E{T_\srpt}.
\end{equation}
The two main steps are
(a) formalizing the definition of SRPT-CUE using a rank function and
(b) showing that \cref{thm:soap_match_minimizes_work} holds for SRPT-CUE.
The difficulty of these steps depends on the details
of the estimation error model.
For example, if the estimated remaining size functions for each job
are sampled i.i.d. from some function distribution,
then (a) can be done using methods of \citet{scully_soap_2018},
and (b) follows for the same reasons as the policies we consider.

It may be possible to handle even adversarial estimation errors,
provided they stay $(\beta, \alpha)$-bounded,
by using the methods of \citet{scully_soap_2018-1}.
In this model, (a) is done by assigning each job state
a rank \emph{interval} instead of a single rank,
from which the job's rank may be adversarially chosen.
However, (b) may require placing some limit on the adversary's power,
such as making them oblivious to the system state.

We note that for \cref{eq:srpt-cue} to hold,
it is important that under SRPT-CUE,
a job's rank changes only while that job is in service.
For example, if a job's rank can change while it is not in service,
then small fluctuations in rank might cause the system to split its effort
between two jobs of similar remaining size,
which is worse than serving one of the jobs before the other.
See \citet[proof of Theorem~6]{scully_soap_2018-1} for further discussion.

\end{document}